\newtheorem{lemma}{Lemma}
\newtheorem{theorem}{Theorem}
\newtheorem{remark}{Remark}
\newcommand\rk{\normalfont{\mbox{rk}}}
\newcommand{\Tr}{\normalfont{\mbox{Tr}}}
\newcommand\init{\normalfont {\sf init}}
\newcommand\achi{\normalfont{\footnotesize \mbox{achi}}}
\def\BibTeX{{\rm B\kern-.05em{\sc i\kern-.025em b}\kern-.08em
    T\kern-.1667em\lower.7ex\hbox{E}\kern-.125emX}}
\def\bbsmatrix#1{\begin{bsmallmatrix}#1\end{bsmallmatrix}}
\title{On the Capacity of Vector Linear Computation  over a Noiseless Quantum Multiple Access Channel with Entangled Transmitters}
\author{Yuhang Yao, Syed A. Jafar\\
{\small Center for Pervasive Communications and Computing (CPCC)}\\
{\small University of California Irvine, Irvine, CA 92697}\\
{\small \it Email: \{yuhangy5, syed\}@uci.edu}
}
\date{}
\begin{document}
\maketitle
\begin{abstract}
Network function computation is an active topic in network coding, with much recent progress for  linear (over a finite field) computations over broadcast (LCBC) and multiple access (LCMAC) channels. Over a quantum multiple access channel  (QMAC) with quantum-entanglement shared among transmitters, the linear computation  problem (LC-QMAC) is non-trivial even when the channel  is noiseless, because of the challenge of optimally exploiting transmit-side entanglement through distributed coding. Given an arbitrary linear function of data streams defined in a finite field $\mathbb{F}_d$, the LC-QMAC problem seeks the optimal communication cost (minimum number of qudits that need to be sent by the transmitters to the receiver, per computation instance) over a noise-free QMAC, when the independent input data streams originate at the corresponding transmitters, who share quantum entanglement in advance.   As our main result, we fully solve this problem for $K=3$ transmitters ($K\geq 4$ settings remain open). Coding schemes based on the $N$-sum box protocol (along with time-sharing and batch-processing) are shown to be information theoretically optimal in all cases. 
\end{abstract}

\maketitle

\section{Introduction}
With the much-anticipated quantum technologies appearing on the horizon \cite{caleffi_tutorial2}, there is increasing interest in exploring the potential impacts on communication and computation capabilities. In particular, distributed encoding of \emph{classical} information into entangled \emph{quantum} systems over \emph{many-to-one}  communication networks is a cross-cutting theme across a variety of active research areas that include  quantum private information retrieval (QPIR) \cite{allaix2020quantum, QMDSTPIR,Lu_Jafar_QXSTPIR,AytekinIT25,AytekinGC23,AytekinITW24}, quantum metrology and sensing \cite{Lloyd, DQS_QZ,Rubio_2020},  quantum machine learning  \cite{QML1,QML2} and quantum simultaneous message passing \cite{kawachi2021communication,christensen2023private}. By exploiting uniquely quantum phenomena such as  entanglement and superposition, the hybrid classical-quantum (CQ) paradigm promises precision, security, privacy and efficiency guarantees beyond the fundamental limits of purely classical systems. This may be accomplished, for example, by sending the entangled quantum systems to a central receiver that extracts the desired information through a joint measurement. 

In order to understand the fundamental limits of many-to-one CQ systems it is imperative to study the classical information carrying capacity of a quantum multiple access (QMAC) channel. One approach in this direction focuses on the challenges posed by \emph{noisy} quantum channels, both for  \emph{communication} tasks  --- where the receiver's goal is to recover the transmitters' data inputs (messages) \cite{winter_multiple_access_capacity, bennett_shor_capacity, hsieh2008entanglement, yard2008capacity,Shi_Guha_QMAC}, as well as \emph{computation} tasks --- where the receiver only wishes to retrieve a particular  function (e.g., sum) of the inputs \cite{sohail2022unified, sohail2022computing}. Advances in this direction tend to require quantum generalizations of classical random coding arguments, made especially challenging by the superadditivity of quantum capacity \cite{superadditivity} which presents obstacles to single-letterization. Remarkably, even for a  \emph{point-to-point} noisy quantum channel, a  computable closed form capacity expression is not always available.

A different approach,   called the LC-QMAC problem \cite{Yao_Jafar_Sum_MAC, Allaix_N_sum_box, Yao_Jafar_SQEMAC,Ulukus_Guha}, emerged relatively recently out of QPIR literature \cite{QMDSTPIR,Lu_Jafar_QXSTPIR,AytekinIT25,AytekinGC23,AytekinITW24} and focuses exclusively on the utility of transmitter-side\footnote{Prior entanglement with the receiver is not assumed by default in the LC-QMAC, but can be modeled by including a dummy transmitter as in \cite{Yao_Jafar_SQEMAC}.} quantum \emph{entanglement} for linear computation (LC) tasks  under idealized assumptions on the QMAC, e.g., the channels through which the quantum systems are delivered to the receiver may be assumed to be noise-free. The noise-free model ensures that the capacity reflects the fundamental limits of \emph{entanglement} as a resource for computation, rather than those of the underlying noise models and associated countermeasures. Essentially in this case, \emph{the entanglement is the channel}, i.e., quantum entanglement introduces non-classical dependencies between the distributed quantum systems, which collectively constitute a non-trivial channel. Intuitively, the  challenge  is  to  extract as much distributed superdense coding gain \cite{DSC2,DSC4, christensen2023private, Petar2025,Dutta2023} as possible through distributed coding and joint measurements to match the desired computation task at the receiver, thereby \emph{maximizing the efficiency (capacity) of the communication resource (qubits) required for the desired computation}. Idealized channel models  make the problem  more tractable --- optimal coding schemes under this approach are more likely to be non-asymptotic, and the capacity more likely to be found in closed form, thus somewhat transparent and insightful. Indeed, this is the case when the function to be computed is simply a sum of the transmitters' inputs \cite{Yao_Jafar_Sum_MAC}. The LC-QMAC approach  seeks a resource theoretic accounting  analogous to the degrees of freedom (DoF) studies of wireless networks \cite{Jafar_FnT} where the noise is similarly de-emphasized. It is a quantum extension of the classical topic of \emph{network function computation} \cite{Appuswamy3, Huang_Tan_Yang_Guang,  Ramamoorthy_Langberg,  Yao_Jafar_3LCBC, Yao_Jafar_KLCBC, Derya_CBC}, and as such is relevant to applications that seek communication-efficient computation, such as QPIR  \cite{QMDSTPIR,Lu_Jafar_QXSTPIR,AytekinIT25,AytekinGC23,AytekinITW24}.

It is important to note that despite the simplification afforded by idealized (rather than noisy) channel models the LC-QMAC problem remains challenging because of the long recognized \cite{Korner_Marton_sum} increased difficulty of characterizing the capacity for \emph{computation} (rather than \emph{communication}) tasks, as evident from the abundance of open problems in network function computation. The present work falls under the LC-QMAC paradigm. See Fig. \ref{fig:LCQMAC} for an illustration of the LC-QMAC problem considered in this work. A formal description is presented in Section \ref{sec:LCQMAC}.
\begin{figure}[htbp]
\center
\includegraphics[height=0.4\textwidth]{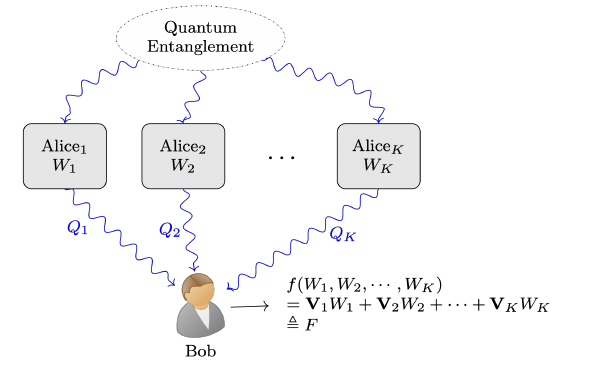}
\caption{LC-QMAC($\mathbb{F}_d, K, {\bm V}_1, {\bm V}_2, \cdots, {\bm V}_K$). $Q_1, Q_2,$ $ \cdots, Q_K$ are entangled quantum systems. Alice$_k$ encodes $W_k$ into $Q_k$, and Bob measures the joint system $Q_1Q_2\cdots Q_K$ to obtain the desired computation $F$.} \label{fig:LCQMAC}
\end{figure}

\subsection{Background: $N$-sum Box  for Linear Computation over a QMAC (LC-QMAC)} \label{sec:background}
\begin{figure*}[t]
\center
\includegraphics[height=0.4\textwidth]{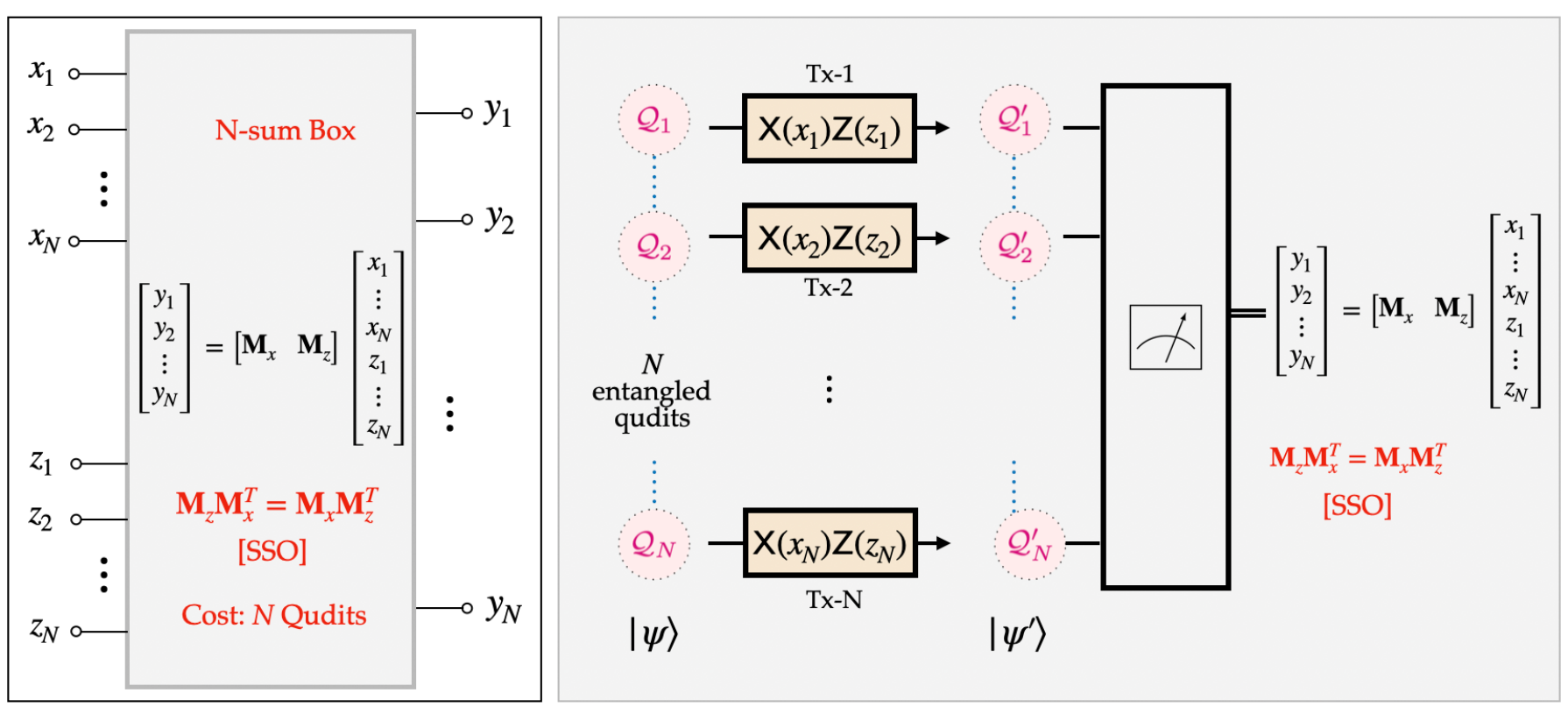}
\caption{The $N$-sum box \cite{Allaix_N_sum_box} is shown on the left as a black-box abstraction of a quantum protocol for classical distributed many-to-one linear computation. The actual quantum protocol is shown on the right (details can be found in \cite{Allaix_N_sum_box}). $N$ qudits are initially prepared in a suitable stabilizer state $\ket{\psi}$ and distributed to $N$  transmitters, the classical inputs $(x_n,z_n)$ are applied by the $n^{th}$ transmitter to manipulate the $n^{th}$ qudit via conditional Pauli $X$ and $Z$ gates, all $N$ qudits are sent to a receiver (so the communication cost of the protocol is $N$ qudits), and a joint measurement at the receiver produces the linear function of the inputs, ${\bm y} = {\bm M}_x {\bm x}+{\bm M}_z {\bm z}$. Given any ${\bm M}_x, {\bm M}_z\in\mathbb{F}_d^{N\times N}$ there exists a stabilizer state $\ket{\psi}$ and a measurement that realizes this computation functionality, provided rank$[{\bm M}_x, {\bm M}_z]=N$ and ${\bm M}_x{\bm M}_z^\top = {\bm M}_z{\bm M}_x^\top$ (strong self-orthogonality).} \label{fig:Nsumbox}
\end{figure*}

As the starting point for this work, consider the $N$-sum box protocol formalized in  \cite{Allaix_N_sum_box}, which specifies a set of $\mathbb{F}_d$ linear functions that can be computed over an ideal (noise-free) $N$-to-$1$ QMAC, with $N$-qudits being transmitted to a central receiver, one each from each of $N$ transmitters who share quantum entanglement in advance but are not otherwise allowed to communicate with each other. Specifically, if the $n^{th}$ transmitter, $n\in[N]$, has classical inputs $(x_n, z_n) \in \mathbb{F}_d^2$ which it encodes into its own qudit by local Pauli $X,Z$ operations, then after receiving $1$ noise-free qudit per transmitter, following the $N$-sum box protocol, the receiver is able to obtain ${\bm y} = {\bm M}_x {\bm x}+{\bm M}_z {\bm z}$, where ${\bm x} = [x_1,\cdots,x_n]^\top, {\bm z} = [z_1,\cdots,z_n]^\top$, and ${\bm M}_x, {\bm M}_z$ are $N\times N$ matrices in $\mathbb{F}_d$ such that rank$[{\bm M}_x, {\bm M}_z]=N$ and ${\bm M}_x{\bm M}_z^\top = {\bm M}_z{\bm M}_x^\top$. The last condition is called the \emph{strong self-orthogonality} (SSO) condition. See Figure \ref{fig:Nsumbox} for an illustration.

The significance of the SSO condition can be briefly summarized as follows. The $N$-sum box protocol is built on the framework of stabilizer codes. The protocol requires that the qudits be measured with respect to a set of commutative observables called the stabilizers, that also determine the initial entangled state. The commutativity of the stabilizers, required for such a protocol, manifests as the SSO condition. The smallest concrete $N$-sum box protocol is a $2$-sum box. Consider matrices ${\bm M}_x =\bbsmatrix{1&1\\0&0}, {\bm M}_z =\bbsmatrix{0&0\\1&-1}$ in $\mathbb{F}_d$. The SSO property is readily verified. It follows that there exists a $2$-sum box protocol where Transmitter $n$, $n\in[2]$, has inputs $(x_n,z_n)\in \mathbb{F}_d^2$, sends $1$ encoded qudit to the receiver, and the receiver jointly measures the $2$ qudits to obtain the outputs ${\bm y}=\bbsmatrix{x_1+x_2\\ z_1-z_2}$.

It is worth mentioning that the $N$-sum box protocol emerged out of the QPIR literature and was formalized in \cite{Allaix_N_sum_box} primarily as a useful abstraction that hides the details of the underlying quantum coding schemes, and thereby makes these quantum coding applications accessible to classical coding and information theorists.

\subsection{Motivating Examples} \label{sec:motivating_examples}
Let us motivate this work with three toy examples. Toy Example 1 illustrates the standard problem formulation in the `forward' (easy) direction, i.e., given SSO matrices one can apply the $N$-sum box protocol to find out what linear functions can be computed and at what communication cost. Toy Example 2 illustrates the problem formulation in the `inverse' (harder) direction, i.e., given a desired linear computation find the \emph{optimal} (maximally efficient) protocol to accomplish it. It is important to note that we mean optimality in a strong information theoretic sense, i.e., not limited to $N$-sum box protocols.  Toy Example 2 presents a relatively simple case of the inverse question that can be solved with existing bounds. Finally, Toy Example 3 shows how existing bounds are \emph{insufficient} to answer the inverse question, thereby motivating the work in this paper.

\subsubsection{Toy Example 1} Given  the matrices ${\bm M}_x =\bbsmatrix{1&1&1\\0&0&0\\0&0&0}, {\bm M}_z =\bbsmatrix{0&0&0\\1&2&0\\1&0&2}$, say over $\mathbb{F}_d, d=3$, it is readily verified that the SSO property is satisfied, giving us an $N$-sum box ($N=3$) with output ${\bm y}=\bbsmatrix{x_1+x_2+x_3\\ z_1+2z_2\\ z_1+2z_3}$. The box can be used for example, in an LC-QMAC setting where we have $3$  transmitters: Alice$_1$, Alice$_2$, Alice$_3$, with prior shared quantum entanglement, who are presented with independent classical input streams $(A,B), (C,D),(E,F)$, respectively, all symbols in $\mathbb{F}_3$, and a receiver (Bob) who wishes to compute, 
\begin{align*}
	f(A,B,C,D,E,F)=\bbsmatrix{A+C+E\\ B+2D\\ B+2F}.
\end{align*}
The total download cost incurred by the $N$-sum box solution in this case is $3$ qudits. In fact, the scheme is  information theoretically optimal in its communication cost because with i.i.d. uniform inputs the entropy $H(f(A,B,C,D,E,F))=3$ dits, and Holevo's bound implies that $3$ dits (in this case meaning $d=3$-ary digits) worth of information cannot be delivered by fewer than $3$ qudits. By the same reasoning, given arbitrary SSO matrices ${\bm M}_x, {\bm M}_z$ we can identify the corresponding linear function that is optimally computed by the $N$-sum box protocol in an LC-QMAC setting. 

\subsubsection{Toy Example 2} Now let us consider an `inverted' situation, i.e., instead of the ${\bm M}_x, {\bm M}_z$ matrices, we are given a desired linear function to be computed over a given QMAC. For example, suppose the three  transmitters, Alice$_1$, Alice$_2$, Alice$_3$, have classical input data streams $(A), (B),(C)$, respectively, all symbols in $\mathbb{F}_3$, and Bob (the receiver) wishes to compute $g(A,B,C)=\bbsmatrix{A+B+C},$ i.e., the sum of the three data-streams.  Since the entropy of $g(A,B,C)$ is at most $1$ dit per instance,  Holevo's bound only indicates that the communication cost is at least $1$ qudit per instance of $g$. One could try to \emph{search} for an $N$-sum box (i.e., SSO matrices ${\bm M}_x,{\bm M}_z$) that can output $g(A,B,C)$ at the total communication cost equal to (or approaching asymptotically with joint coding across many computation instances) $1$ qudit per instance, but such a search would be futile. This is because an information theoretic (min-cut) argument (cf. \cite{Yao_Jafar_Sum_MAC}) shows that no quantum coding scheme can allow Bob to recover $g(A,B,C)$ at a cost less than $1.5$ qudits per computation.\footnote{We will occasionally drop the qualifier `per computation' for the sake of brevity, with the understanding that download costs are always measured per instance of the desired function computation.} The optimal total download cost is indeed $1.5$ qudits in this case, and it is achievable with the $N$-sum box protocol \cite{Yao_Jafar_Sum_MAC} by coding over $L=2$ instances so that $A=(A_1,A_2), B=(B_1,B_2), C=(C_1,C_2)$. In fact
the same $N$-sum box as in the previous example suffices, by setting ${\bm x}=\bbsmatrix{A_1&B_1&C_1}^\top$ and ${\bm z}=\bbsmatrix{A_2&B_2&C_2}^\top$, which produces output $\bbsmatrix{A_1+B_1+C_1\\ A_2+2B_2\\ A_2+2C_2}$. Note that once Bob recovers both $A_2+2B_2$ and $A_2+2C_2$, he can add them and divide the sum by $2$  to  recover $A_2+B_2+C_2$. The inverted problem formulation --- finding a suitable $N$-sum box protocol given the desired computation --- is perhaps more natural. However, the inverted problem is challenging when the desired computation does not directly correspond to an SSO matrix structure, and therefore may need to be minimally expanded (e.g., by breaking $A_2+B_2+C_2$ into $A_2+2B_2$ and $A_2+2C_2$ as in this toy example) into a larger computation that does fit an SSO structure. We note the recent progress in this direction in \cite{Ulukus_Guha} which investigates $N$-sum box based coding schemes with more than $3$ transmitters.

\subsubsection{Toy Example 3}\label{sec:toyex3}
Suppose the $3$ transmitters Alice$_1$, Alice$_2$, Alice$_3$, have classical input streams $(A), (B), (C,D)$, respectively, all symbols in $\mathbb{F}_3$, and Bob wishes to compute the function 
\begin{align*}
	h(A,B,C,D)=\bbsmatrix{A+B+C\\ D}.
\end{align*}
Applying Holevo's bound for this case only shows that the communication cost must be at least $2$ qudits. Min-cut arguments also produce the same bound. However, a search for such an $N$-sum box fails, leading to the question: \emph{Does there always exist an $N$-sum box protocol that achieves the information theoretically minimal download cost per computation given an arbitrary desired linear computation over a QMAC?} More generally, \emph{what is the optimal communication cost per computation instance for an arbitrary desired  linear computation over a QMAC, and how can it be achieved?} For the particular setting of Toy Example 3, it turns out that what is needed is a stronger information theoretic converse bound (see Theorem \ref{thm:new_bounds} in this work), that will show  that the optimal communication cost is at least $2.5$ qudits (per computation).  In fact, if $\Delta_1,\Delta_2,\Delta_3$ represent the number of qudits (per computation instance) sent to Bob from Alice$_1$, Alice$_2$, Alice$_3$, respectively, then the (closure of) set of \emph{all} feasible tuples is characterized as follows (see Theorem \ref{thm:main_K3} in this work).
{\small
\begin{align}
	\mathfrak{D}^* = 
	\left\{
	\begin{bmatrix}
		\Delta_1\\\Delta_2\\\Delta_3
	\end{bmatrix}
	\in \mathbb{R}^3
	\left|
	\begin{array}{l}
		\Delta_1 \geq 1/2\\
		\Delta_2 \geq 1/2\\
		\Delta_3 \geq 1\\
		\Delta_1+\Delta_2+\Delta_3 \geq 5/2
	\end{array}
	\right.
	\right\}.\label{eq:ex3}
\end{align}
}%
See Fig. \ref{fig:ex3} for an illustration of  the optimal region $\mathfrak{D}^*$ of all feasible tuples, as well as an optimal coding scheme that utilizes a $5$-sum box protocol. The claim that $\mathfrak{D}^*$ is information-theoretically optimal (i.e., that there cannot exist any other scheme capable of achieving a better communication cost) requires a matching impossibility result, which follows from the proof of converse of Theorem \ref{thm:new_bounds}, presented in Section \ref{proof:converse}.
\begin{figure*}[htbp]
\center	
\includegraphics[width=0.25\textwidth]{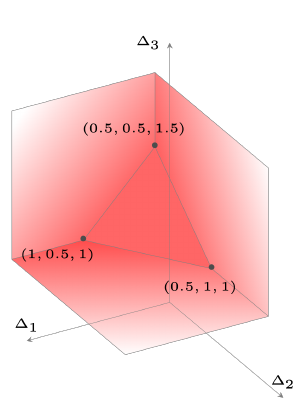}\includegraphics[width=0.75\textwidth]{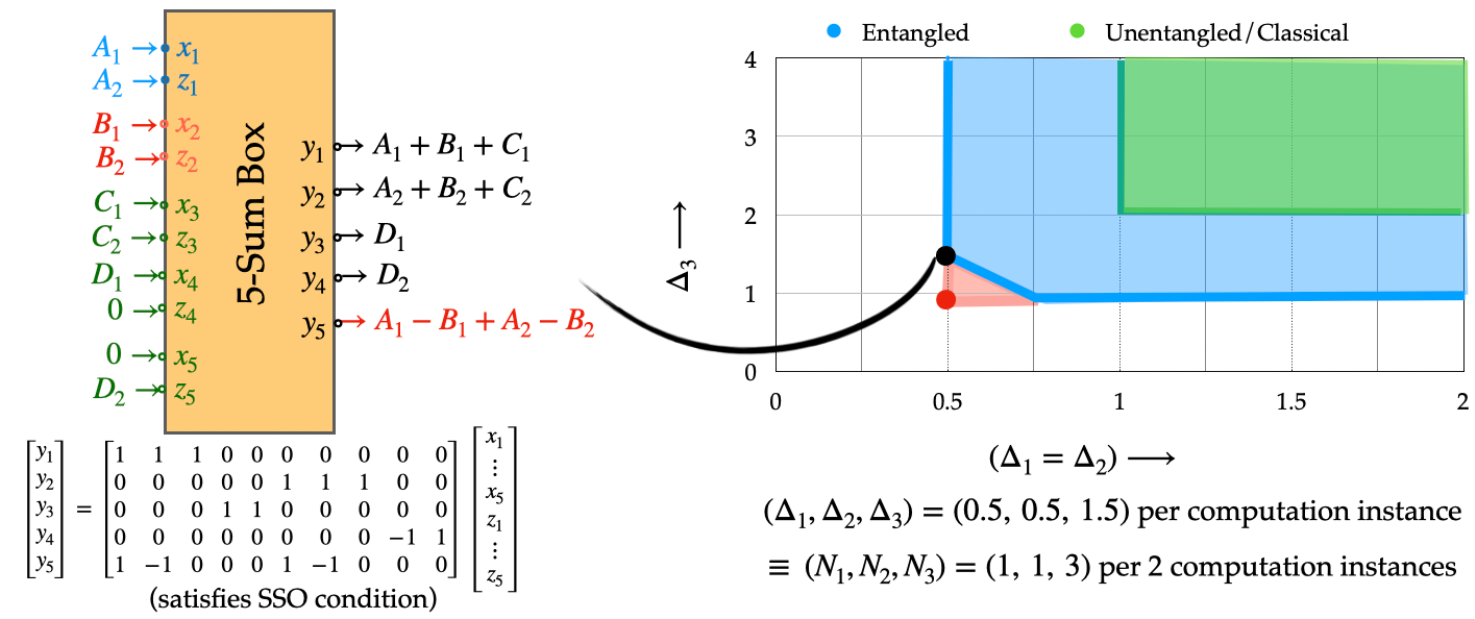}
\caption{$\mathfrak{D}^*$ for Toy Example 3 is shown on the left. A coding scheme over $\mathbb{F}_3$ utilizing a $5$-sum box protocol is shown in the middle, achieving $2$ computations of the desired function $(A+B+C,D)$ at a communication cost of $N_1,N_2,N_3=1,1,3$ qudits from Alice$_1$, Alice$_2$, Alice$_3$, who have  input streams $(A), (B), (C,D)$, respectively. A projection of  $\mathfrak{D}^*$ into $2$ dimensions, by setting $\Delta_1=\Delta_2$, is shown on the right (blue region), along with the  unentangled/classical feasible region (green, contained in blue, obtained directly from classical cut-set bounds). The black dot $(\Delta_1,\Delta_2,\Delta_3)=(0.5,0.5,1.5)$ per computation, corresponds to the scheme illustrated in the middle. The region that is outside the blue region, e.g., the red dot $(\Delta_1,\Delta_2,\Delta_3)=(0.5,0.5,1)$, is not feasible by \emph{any} coding scheme, i.e., not even with other protocols that may not rely on the $N$-sum box, as shown by the information theoretic converse of Theorem \ref{thm:new_bounds}.}\label{fig:ex3}
\end{figure*}

\subsection{Overview of Contribution}
\subsubsection{Key Questions}
To summarize the motivating examples, while the $N$-sum box abstraction specifies what can be computed {\it given} any choice of SSO matrices ${\bm M}_x, {\bm M}_z$, typically we are much more interested in the \emph{inverted} problem formulation. A general network function computation application may require any particular $\mathbb{F}_d$ linear function $f$ of the transmitters' inputs. The desired function need not satisfy any SSO condition. In fact most linear functions $f({\bm x},{\bm z})$ cannot directly be expressed as $f({\bm x},{\bm z}) = {\bm M}_x {\bm x}+{\bm M}_z {\bm z}$ for some ${\bm M}_x, {\bm M}_z$ that satisfy an SSO condition. What typically matters to an application is the {\it cost} of the {\it desired} computation. So the key objective is to find the most efficient protocol, i.e., the information theoretically optimal protocol for any arbitrary given $\mathbb{F}_d$ linear function. Notably, the case where $f$ is simply the sum of inputs has been settled in \cite{Yao_Jafar_Sum_MAC}, and coding schemes based on the $N$-sum box are shown to be capacity achieving in that case. However, in the general case where $f$ can be an arbitrary vector linear function, it is far from obvious what the optimal cost might be for computing $f$ on a QMAC; whether that cost is achievable with an $N$-sum box protocol; if so, then how can it be achieved; and if not, then what else may be needed. In particular, the SSO constraint that limits the scope of $N$-sum box functionality is  quite intriguing. Does it represent a fundamental information theoretic limitation? If so, then how does it translate into entropic constraints? Or is it merely an artifact of the  $N$-sum box protocol that may be circumvented by other, more general constructions? Remarkably, it follows from \cite{Yao_Jafar_Sum_MAC} that the SSO constraint does not pose a limitation for the $K=2$ transmitter setting.\footnote{This is because for linear computations the $2$-sum box allows full cooperation between the two transmitters \cite{Yao_Jafar_Sum_MAC}.} Therefore, the smallest case that is open is the $3$-to-$1$ LC-QMAC setting, which is indeed our main focus in this paper. The main contribution of this work is to answer the aforementioned questions fully for the $K=3$ transmitter setting.  Let us note, however, that the question remains open for $K>3$ transmitters.

\subsubsection{Summary of Results}
Specifically, our main result is a solution to the inverted problem identified above, hence labeled an \emph{inverted $3$-sum box}. Given \emph{any} desired $\mathbb{F}_d$ linear computation $f$ (not limited to scalar linear functions as in \cite{Yao_Jafar_Sum_MAC}) on a $3$-to-$1$ QMAC, the \emph{inverted $3$-sum box} solution provides, 
\begin{enumerate}
\item[-] a region $\mathfrak{D}^*$ of download cost (per computation instance) \emph{tuples} $(\Delta_1, \Delta_2,\Delta_3)$ corresponding to Alice$_1$, Alice$_2$, Alice$_3$, such that each of these tuples is sufficient for the desired computation (note that this is a \emph{region} of tuples, so we are not limited to just the total download cost, or to symmetric download costs),
\item[-] a coding scheme that makes use of only the $N$-sum box protocol and  TQC to achieve the desired computation for any feasible download cost  tuple in $\mathfrak{D}^*$, and
\item[-] an information theoretic converse which shows that for any download cost tuple outside the set $\mathfrak{D}^*$ the function $f$ cannot be computed by \emph{any} coding scheme (not limited to just the $N$-sum box or TQC schemes).
\end{enumerate}
The result establishes the information theoretic optimality of the $N$-sum box protocol for the $K=3$ transmitter LC-QMAC. Interestingly, this is indicative of the information theoretic significance of the SSO constraint, since the achievable schemes that are limited primarily by the SSO constraint, end up being information theoretically optimal.

Last but not the least, since we focus on the $3$ transmitter LC-QMAC, let us recall a somewhat surprising observation from \cite{Yao_Jafar_Sum_MAC}, that $3$-way entanglement is never \emph{necessary} to achieve capacity in the $\Sigma$-QMAC. The $\Sigma$-QMAC is a special case of the LC-QMAC where the desired computation is simply a sum of data-streams, like the setting of Toy Example 2. Recall that coding schemes based on the $N$-sum box are  sufficient for achieving the capacity of the $\Sigma$-QMAC in \cite{Yao_Jafar_Sum_MAC}. In particular, \cite{Yao_Jafar_Sum_MAC} shows that any coding scheme for a $\Sigma$-QMAC that utilizes a $3$-sum box, can be translated into an equally efficient coding scheme that utilizes only $2$-sum boxes, and therefore only $2$-way entanglements. For instance, in Toy Example $2$, we note that $A+B+C$ can  be computed equally efficiently with only $2$-sum boxes by computing $f_1(A,B)=A_1-A_2+B_1, f_2(B,C)=B_2-C_1+C_2, f_3(A,C)=A_2+C_1$, each of which requires only a $2$-sum box, and then recovering the desired computations as $f_1+f_3=A_1+B_1+C_1=g(A_1,B_1,C_1)$ and $f_2+f_3=A_2+B_2+C_2=g(A_2,B_2,C_2)$, for the same total download cost of $1.5$ qudits per computation instance. 

Remarkably, we find that this is no longer the case when the scope of desired computations is expanded from the $\Sigma$-QMAC to the LC-QMAC, i.e., instead of only a sum of inputs, the desired computation can be an arbitrary \emph{vector} linear combination of inputs, as in this paper. Indeed, $3$-way entanglements are \emph{necessary} in general for \emph{vector} linear computations. We establish this non-trivial fact  by providing an information theoretic proof  that $3$-way entanglements between the transmitters are necessary in the $3$-transmitter LC-QMAC setting of Toy Example 1 in order  to achieve the optimal cost of $3$ qudits per computation. 
Specifically, we prove in the Appendix \ref{proof:necessity3} that with only $2$-way entanglements (which allow $2$-sum boxes) the  total download cost for Toy Example $1$ cannot be less than $3.5$ qudits per computation. While the proof is non-trivial, partial intuition can be gained from the observation that $2$-way entanglement at best allows any two transmitters to collaborate, i.e., to send any coded symbols of their joint database at a cost of one qudit/dit.  The proof is then done by bounding the download cost of a classical LC-MAC where each transmitter knows the data streams of a pair of transmitters in the original LC-QMAC. \\

\noindent {\it Notation:} For $n\in \mathbb{N}$, define $[n] \triangleq \{1,2,\cdots, n\}$. For $a < b \in \mathbb{N}$ define $[a:b]=\{a,a+1,\cdots,b\}$. Given a set $\mathcal{S}$, define $A_\mathcal{S}\triangleq\{A_s \mid s\in \mathcal{S}\}$. $\mathbb{F}_d$ denotes the finite field with order $d$ being a power of a prime. For a matrix $M \in \mathbb{F}_d^{a\times b}$, $\rk(M)$ denotes its rank over $\mathbb{F}_d$. $\mathbb{R}$ and $\mathbb{Q}$ denote the set of reals and rationals, respectively. For vectors $u,v$ of the same length, $u\geq v$ is equivalent to $u_i \geq v_i, \forall i$ where $u_i, v_i$ are the $i^{th}$ component of $u$ and $v$, respectively.  Given a tripartite quantum system $ABC$ in the state $\rho$, $H(A)_\rho$ denotes the entropy of $A$ with respect to the state $\rho$. The conditional entropy $H(A\mid B)_\rho$ is defined as $H(AB)_\rho - H(B)_\rho$ and the conditional mutual information is defined as $I(A;B\mid C)_\rho = H(A\mid C)_\rho + H(B\mid C)_\rho - H(AB \mid C)_\rho$. The subscript in the information measures may be omitted for compact notation when the underlying state is obvious from the context. If the state additionally depends on a classical random variable $X$ with distribution $p_X$, and say $\rho$ denotes the joint state of the classical-quantum system, then $H(A\mid X=x)_\rho$ denotes the entropy of $A$ conditioned on the event $X=x$. Similar to classical information measure, we have $H(A\mid X)_\rho = \sum_x p_X(x)H(A\mid X=x)_\rho$.

\section{Problem Formulation}
\subsection{LC-QMAC} \label{sec:LCQMAC}
An LC-QMAC setting (see Fig. \ref{fig:LCQMAC}) is specified by the parameters $(\mathbb{F}_d, K,{\bm V}_1, \cdots, {\bm V}_K)$. $\mathbb{F}_d$ is a finite field of order $d$. $K$ is the number of transmitters (denoted as Alice$_k, k\in[K]$). For $k\in [K]$, ${\bm V}_k$ is an $m\times m_k$ matrix with elements in $\mathbb{F}_d$. Alice$_k, k\in[K]$ has a data stream $W_k$, which takes values in $\mathbb{F}_d^{m_k\times 1}$, and the receiver, Bob, wants to compute an arbitrary $\mathbb{F}_d$ linear function of the data streams, $F = {\bm V}_1W_1+\cdots+{\bm V}_KW_K \in \mathbb{F}_d^{m\times 1}$. Without loss of generality we assume that for all $k\in[K]$,
\begin{enumerate}
	\item $m_k \leq m$;
	\item ${\bm V}_k$ has full column rank.
\end{enumerate}
The desired computation is to be performed multiple times, for successive instances of the data streams. Specifically, for $\ell \in \mathbb{N}$, the realization of the data stream $W_k$ corresponding to the $\ell^{th}$ instance of the computation is denoted as $W_k^{\ell}$. Denote $W_k^{[L]} = [W_k^{1}, W_k^{2}, \cdots, W_k^{L}]$. The $\ell^{th}$ instance of the function to be computed is then identified as $F^{\ell}$ and we have the compact notation $F^{[L]} = [F^1, F^2,\cdots, F^L]$.

\subsection{Coding Schemes for LC-QMAC}
For the LC-QMAC  $(\mathbb{F}_d, K,{\bm V}_1, \cdots, {\bm V}_K)$, a (quantum) coding scheme involves the following elements.
\begin{itemize}
	\item A batch size $L \in \mathbb{N}$,  which represents the number of computation instances to be encoded together by the coding scheme.
	\item A composite quantum system $Q=Q_1Q_2\cdots Q_K$ comprised of $K$ subsystems, with initial state of $Q$ specified by the density matrix $\rho^{\init}$.
	\item A set of encoders represented as quantum channels $\{\mathcal{E}_k^{(w_k)} \colon k\in[K], w_k\in \mathbb{F}_d^{m_k\times L}\}$, such that the output dimension of each $\mathcal{E}_k^{(w_k)}$ is equal to $\delta_k$.
	\item A set of operators $\{\Lambda_y\colon y\in \mathcal{Y} \}$ that specify a POVM.
\end{itemize}
See Fig. \ref{fig:coding_scheme} for an illustration of a quantum coding scheme. The coding scheme is explained as follows. There are three stages, referred to as the preparation stage, the encoding stage, and the decoding stage.
\begin{figure}[htbp]
\center
\includegraphics{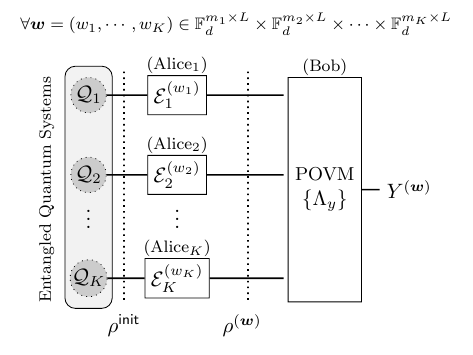}
\caption{A quantum coding scheme for the LC-QMAC. The output measured at the receiver, $Y ^{({\bm w})}$, must be equal to ${\bm V}_1w_1+{\bm V}_2 w_2+\cdots+{\bm V}_Kw_K$, for all realizations of  $(w_1,w_2,\cdots, w_K)$.}
\label{fig:coding_scheme}
\end{figure}

\begin{enumerate}
	\item \textbf{(Preparation stage):} A $K$ partite quantum system $Q_1Q_2\cdots Q_K$ is prepared in the initial state $\rho^{\init}$ and distributed to the Alices such that for all $k\in[K]$, Alice$_k$ has the subsystem $Q_k$.
	\item \textbf{(Encoding stage):} For data realization (over $L$ instances) 
		 {\small \begin{align*}
		 	(W_1^{[L]}, W_2^{[L]},\cdots, W_K^{[L]}) = (w_1,w_2,\cdots, w_K), 
		 \end{align*} }%
		 Alice$_k$ applies $\mathcal{E}_k^{(w_k)}$ to $\mathcal{Q}_k$ for $k\in [K]$. The output state of the composite quantum system is thus determined as,
		 {\small
		\begin{align}
			\rho^{(w_1,\cdots,w_K)} = \mathcal{E}_1^{(w_1)}\otimes \mathcal{E}_2^{(w_2)}\otimes \cdots \otimes \mathcal{E}_K^{(w_K)} (\rho^{\init}).
		\end{align}}%
	\item \textbf{(Decoding stage):} Bob measures $Q_1Q_2\cdots Q_K$ with POVM $\{\Lambda_y\colon y\in \mathcal{Y} \}$ to obtain the output random variable $Y$, such that,
		{\small \begin{align} \label{eq:state}
			\Pr ( Y = y) = \Tr(\rho^{(w_1,\cdots,w_K)} \Lambda_y),~~ \forall y\in \mathcal{Y}.
		\end{align}}%
\end{enumerate}
A feasible coding scheme must satisfy the following correctness condition,
{\small
\begin{align} \label{eq:correctness}
	&\mbox{[Correctness]} && \Pr (Y = F^{[L]}) = 1,
\end{align}
}%
for all  realizations of the data streams $(w_1,\cdots, w_K) \in \mathbb{F}_d^{m_1\times L} \times \cdots \times \mathbb{F}_d^{m_K \times L}$.

\subsection{Download Cost Tuple} 
Given a feasible coding scheme, define
{\small
\begin{align}
	{\bm \Delta} = (\Delta_1,\cdots, \Delta_K) = \left(\frac{\log_d \delta_1}{L}, 
	\cdots, \frac{\log_d \delta_K}{L}\right)
\end{align}
}%
as the normalized download cost tuple (simply referred to as the cost tuple in the rest of the paper) achieved by the coding scheme. Specifically, for $k\in [K]$, $\Delta_k$ measures the average number of qudits downloaded from Alice$_k$, normalized by the number of computation instances $L$. A cost tuple is said to be \emph{achievable} if it is achieved by some feasible coding scheme.

\subsection{Optimal Cost Region}
For an LC-QMAC, the optimal cost region $\mathfrak{D}^*$ is defined as the closure of the set of all achievable cost tuples. Specifically, let $\mathfrak{C}_L$ denote the set of feasible coding schemes with batch size $L$. Let ${\bm \Delta}(\mathcal{C})$ denote the cost tuple achieved by the coding scheme $\mathcal{C}$. Define $\mathfrak{D}_L = \{{\bm \Delta}(\mathcal{C})\colon \mathcal{C} \in \mathfrak{C}_L\}$. Then 
{\small
\begin{align}
	\mathfrak{D}^* \triangleq  \overline{ \bigcup_{L=1}^{\infty} \mathfrak{D}_L},
\end{align}
}%
where $\overline{X}$ denotes the closure of $X$ in $\mathbb{R}^K$.

\section{Preliminaries}
We briefly review some relevant known results.
\subsection{$N$-sum box}
Formally, an $N$-sum box is specified by a finite field $\mathbb{F}_q$, a  matrix ${\bm M} = [{\bm M}_x, {\bm M}_z]$ where ${\bm M}_x$, ${\bm M}_z \in \mathbb{F}_q^{N\times N}$ such that $\rk({\bm M}) = N$ and ${\bm M}_x {\bm M}_z^\top = {\bm M}_z {\bm M}_x^\top$, which is referred to as the strong self-orthogonality (SSO) property. The matrix ${\bm M}$ is called the \emph{transfer} matrix. The following lemma summarizes the functionality of the $N$-sum box.

\begin{lemma}[$N$-sum box \cite{Allaix_N_sum_box}] \label{lem:box}
	There exists a set of orthogonal quantum states, denoted as $\{\ket{{\bm v}}_{\bm M}\}_{{\bm v}\in \mathbb{F}_q^{N\times 1}}$ defined on $\mathcal{H}_q^{\otimes N}$, the Hilbert space of $N$ $q$-dimensional quantum subsystems $Q_1,Q_2,\cdots, Q_N$, such that when ${\sf X}(x_i){\sf Z}(z_i)$ is applied to $Q_i$ for all $i \in [N]$, the state of the composite quantum system $Q$ changes from $\ket{\bm a}_{\bm M}$ to $\ket{{\bm a}+{\bm M}\bbsmatrix{{\bm x}\\{\bm z}}}_{\bm M}$ (with global phases omitted), i.e., $\otimes_{i\in [N]}{\sf X}(x_i){\sf Z}(z_i) \ket{{\bm a}}_{\bm M} \equiv \ket{{\bm a}+{\bm M}\bbsmatrix{{\bm x}\\{\bm z}}}_{\bm M}$, for all ${\bm x} \triangleq [x_1,\cdots, x_N]^\top \in \mathbb{F}_q^{N\times 1}$ and ${\bm z} \triangleq [z_1,\cdots, z_N]^\top \in \mathbb{F}_q^{N\times 1}$.
\end{lemma}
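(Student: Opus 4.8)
The plan is to recast the statement in the symplectic language of the discrete Weyl--Heisenberg group, reduce the SSO hypothesis to an isotropy condition, and then realize the basis $\{\ket{\bf v}_{\bf M}\}$ as the image of the computational basis under a suitable Clifford unitary. Throughout, write $u=\bbsmatrix{{\bf x}\\{\bf z}}\in\mathbb{F}_q^{2N}$ and $P(u)\triangleq\otimes_{i\in[N]}{\sf X}(x_i){\sf Z}(z_i)$, and let $J=\bbsmatrix{{\bf 0}&{\bf I}\\-{\bf I}&{\bf 0}}$ encode the symplectic form $\langle u,u'\rangle\triangleq u^\top J u'$ (read into $\mathbb{F}_p$ via the field trace when $q=p^t$). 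The two facts I will lean on are the Weyl composition rule $P(u)P(u')\equiv P(u+u')$ and the commutation rule $P(u)P(u')\equiv \omega^{\langle u,u'\rangle}P(u')P(u)$, where $\equiv$ hides phases and $\omega=e^{2\pi i/p}$. I first observe that $\equiv$ in the statement is exactly this ``up to an overall scalar'' relation at the level of rays: since the box is used by \emph{measuring} in $\{\ket{\bf v}_{\bf M}\}$, any ${\bf a}$-dependent scalar multiplying the output is operationally invisible, so it suffices to control how $P(u)$ moves the one-dimensional subspaces (rays) $[\ket{\bf v}_{\bf M}]$.

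The algebraic heart is to show that SSO says precisely that the row space of ${\bf M}$ is a Lagrangian (maximal isotropic) subspace. Indeed, the symplectic product of rows $i$ and $j$ of ${\bf M}$ equals $({\bf M}_x{\bf M}_z^\top-{\bf M}_z{\bf M}_x^\top)_{ij}$, so ${\bf M}J{\bf M}^\top={\bf 0}$ is equivalent to ${\bf M}_x{\bf M}_z^\top={\bf M}_z{\bf M}_x^\top$; together with $\rk({\bf M})=N$ this makes $\mathrm{rowsp}({\bf M})$ Lagrangian. A short computation ($\ker{\bf M}=\mathrm{rowsp}({\bf M})^\perp$ and, using $J^\top J={\bf I}$, $\ker{\bf M}=J\,\mathrm{rowsp}({\bf M})$) then shows $\ker{\bf M}$ is Lagrangian as well, which is the reason a consistent shift action can exist at all: the subgroup $\{P(\kappa)\colon\kappa\in\ker{\bf M}\}$ commutes and hence can act by scalars on a common basis. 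Next I complete ${\bf M}$ to a symplectic matrix. Because $\mathrm{rowsp}({\bf M})$ is Lagrangian, a symplectic Gram--Schmidt procedure yields ${\bf M}'\in\mathbb{F}_q^{N\times 2N}$ with $T\triangleq\bbsmatrix{{\bf M}\\{\bf M}'}$ satisfying $TJT^\top=J$; the block conditions ${\bf M}J{\bf M}^\top={\bf 0}$ (SSO), ${\bf M}J{\bf M}'^\top={\bf I}$, and ${\bf M}'J{\bf M}'^\top={\bf 0}$ amount to choosing a complementary Lagrangian dual to the rows of ${\bf M}$, which exists over any field.

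With $T$ in hand I build the basis via the finite metaplectic (Clifford) correspondence: every $S\in\mathrm{Sp}(2N,\mathbb{F}_q)$ is realized by a unitary $U_S$ on $\mathcal{H}_q^{\otimes N}$ with $U_S P(u)U_S^\dagger\equiv P(Su)$. Taking $S=T^{-1}$ and defining $\ket{\bf v}_{\bf M}\triangleq U_S\ket{\bf v}$ gives an orthonormal basis (a unitary image of the computational basis), and
\begin{align}
P(u)\ket{\bf v}_{\bf M} &= U_S\big(U_S^\dagger P(u)U_S\big)\ket{\bf v}\equiv U_S\,P(S^{-1}u)\ket{\bf v}\nonumber\\
&\equiv U_S\ket{{\bf v}+{\bf M}_x{\bf x}+{\bf M}_z{\bf z}}=\ket{{\bf v}+{\bf M}\bbsmatrix{{\bf x}\\{\bf z}}}_{\bf M},\nonumber
\end{align}
where the middle step uses that on the computational basis $P(\boldsymbol\alpha,\boldsymbol\beta)\ket{\bf v}\equiv\ket{{\bf v}+\boldsymbol\alpha}$, and the $X$-block of $S^{-1}u=Tu$ is exactly ${\bf M}u={\bf M}_x{\bf x}+{\bf M}_z{\bf z}$. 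This is the desired action.

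I expect the Clifford-realization step to be the main obstacle: proving that an arbitrary symplectic $S$ over $\mathbb{F}_q$ lifts to a unitary with the stated conjugation action. I would either cite the standard stabilizer-formalism result, or make it self-contained by factoring $S$ into generators of $\mathrm{Sp}(2N,\mathbb{F}_q)$ (the field-Fourier transform, the ``multiplier'' maps induced by $\mathrm{GL}(N,\mathbb{F}_q)$, and the quadratic-phase ``shears'') and exhibiting an explicit Clifford unitary for each, with care taken over the $2$-cocycle (phase) bookkeeping, which is most delicate when $q$ is even. A fully elementary alternative avoids this theorem entirely: take the common eigenbasis of the commuting family $\{P({\bf m}'_i)\}$ indexed by the rows of ${\bf M}'$ (one-dimensional joint eigenspaces by the finite Stone--von Neumann theorem, since $\mathrm{rowsp}({\bf M}')$ is Lagrangian), label it by the eigenvalue vector ${\bf v}$, and use the commutation rule to verify that $P(u)$ increments ${\bf v}$ by ${\bf M}u$; here the only nontrivial point is matching the shift matrix to ${\bf M}$ through the pairing ${\bf M}J{\bf M}'^\top={\bf I}$, which is routine once the completion $T$ is fixed.
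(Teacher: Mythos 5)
Your proof is correct, but note that the paper itself does not prove Lemma \ref{lem:box}: it is imported verbatim from the cited $N$-sum box reference, so there is no in-paper argument to compare against. What you have reconstructed is essentially the standard stabilizer-formalism derivation underlying that reference: the observation that SSO is exactly ${\bf M}J{\bf M}^\top={\bf 0}$, i.e., that $\mathrm{rowsp}({\bf M})$ (equivalently $\ker{\bf M}=J\,\mathrm{rowsp}({\bf M})$) is Lagrangian, is the right reduction, and your two routes to the basis --- conjugating the computational basis by a Clifford unitary realizing $T^{-1}$, or taking the joint eigenbasis of the commuting family $\{P({\bf m}_i')\}$ --- are both sound; the second is in fact closer to how the original construction is presented (the rows of $J{\bf M}^\top$ generate the stabilizer group whose $q^N$ one-dimensional joint eigenspaces form the measurement basis). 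The verification that the $X$-block of $Tu$ equals ${\bf M}u$, giving the shift ${\bf a}\mapsto{\bf a}+{\bf M}\bbsmatrix{{\bf x}\\{\bf z}}$, is correct, and you are right that only the ray-level ($\equiv$) statement is needed since the box is used by measuring in the basis.

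The one genuine dependency you lean on is the surjectivity of the Clifford group onto the (trace-)symplectic group with the conjugation action $U_SP(u)U_S^\dagger\equiv P(Su)$, including the even-characteristic case $q=2^t$; you flag this honestly and offer the eigenbasis alternative that avoids it. For the alternative route, the ``routine matching'' deserves one more sentence than you give it: the eigenvalue-label shift induced by $P(u)$ on the joint eigenbasis of $\{P({\bf m}_i')\}$ is the linear map $u\mapsto {\bf M}'Ju$, which need not literally equal $u\mapsto{\bf M}u$; one either uses the pairing ${\bf M}J{\bf M}'^\top={\bf I}$ to fix the parametrization of the eigenvalue characters so that the two maps coincide, or simply relabels the basis by the resulting invertible change of coordinates. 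Neither issue is a gap in substance, and the argument is a faithful, correct proof of the lemma as stated.
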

\noindent Note that each of these $q^N$ orthogonal quantum states is uniquely indexed by a vector in $\mathbb{F}_q^N$. According to the lemma, if the input state is chosen as $\ket{\bm 0}_{{\bm M}}$, then the output state is $\ket{{\bm M}\bbsmatrix{{\bm x}\\{\bm z}}}_{{\bm M}}$. Since the states are orthogonal, ${\bm y} = {\bm M}\bbsmatrix{{\bm x}\\{\bm z}}$ can be obtained with certainty by jointly measuring the quantum system $Q_1Q_2\cdots Q_N$ in the basis $\{\ket{{\bm v}}_{\bm M}\}_{{\bm v}\in \mathbb{F}_q^{N\times 1}}$.  

It is noteworthy that coding schemes based on the $N$-sum box have been shown to be capacity achieving for the $\Sigma$-QMAC (where the desired computation is simply a sum of the transmitters' inputs) with arbitrarily distributed entanglements in \cite{Yao_Jafar_Sum_MAC}, for the $\Sigma$-QEMAC, i.e., the $\Sigma$-QMAC where the channels are subject to erasures \cite{Yao_Jafar_SQEMAC}, and for several QPIR applications \cite{song_multiple_server_PIR,song_all_but_one_collusion}.

\subsection{Classical communication capacity of a noiseless quantum channel}\label{sec:classcom}
The classical \emph{communication} capacity of a point-to-point noisy quantum channel was studied in \cite{holevo1973bounds, holevo1998capacity, bennett_shor_capacity}, and the special case of a noiseless channel is particularly well understood (e.g., see \cite[Table I]{bennett_shor_capacity}). The noiseless channel capacity result is informally summarized as follows: 
\begin{enumerate}[align=left]
	\item[\textbf{Fact 1:}] Without receiver-side entanglement, a $\delta$-dimensional quantum system can carry at most $\log_d \delta$ dits of classical information; 
	\item[\textbf{Fact 2:}] With unlimited receiver-side entanglement, a $\delta$-dimensional quantum system can carry at most $2\log_d\delta $ dits of classical information. 
\end{enumerate}
For our \emph{computation} problem, the point to point \emph{communication} capacity results yield elementary \emph{converse bounds}  through cut-set arguments \cite{Appuswamy1}, i.e., by separating the parties into two groups and allowing full-cooperation within each group, collectively considering each group as the transmitter or the receiver, and bounding the communication costs in the resulting communication problem. Remarkably, while cut-set arguments were sufficient to obtain tight converse bounds in the $\Sigma$-QMAC \cite{Yao_Jafar_Sum_MAC}, these bounds will not suffice for the vector LC-QMAC  problem considered in this work.

\section{Results}
For $\mathcal{K} = \{k_1,k_2,\cdots, k_{|\mathcal{K}|}\} \subseteq [K]$, let us  define the following compact notations in Table \ref{tab:notations}, which will be useful in presenting our results.
\begin{table}[h]
\center
\begin{tabular}{rl}
\toprule
\textbf{Symbol} & \textbf{Description} 
\\ \midrule 
${\bm V}_{\mathcal{K}}:$ & $\begin{bmatrix}
	{\bm V}_{k_1} & {\bm V}_{k_2}& \cdots & {\bm V}_{k_{|\mathcal{K}|}} \end{bmatrix}  $ \\  \addlinespace[4pt]
$r_{\mathcal{K}}:$ & $\rk({\bm V}_{\mathcal{K}})$\\  \addlinespace[4pt]
$s_{\mathcal{K}}:$ & $\rk({\bm V}_{[K]}) - \rk({\bm V}_{[K] \setminus \mathcal{K}})$\\ \addlinespace[4pt]  
$\Delta_{\mathcal{K}}:$ & $\sum_{k\in \mathcal{K}} \Delta_k$ \\\bottomrule
\end{tabular}
\caption{Useful compact notations.}
\label{tab:notations}
\end{table}

\noindent For example, for $K=3$, ${\bm V}_{\{1,2\}} = \big[ {\bm V}_1, {\bm V}_2 \big]$, $r_{\{1,2\}} = \rk([{\bm V}_1, {\bm V}_2]), s_{\{1,2\}} = \rk([{\bm V}_1,{\bm V}_2,{\bm V}_3])-\rk({\bm V}_3)$, $\Delta_{[3]} = \Delta_1 + \Delta_2 + \Delta_3$.

\subsection{Converse bounds on $\mathfrak{D}^*$}
Let us first formalize for our LC-QMAC setting a baseline result that follows from existing work as mentioned  in Section \ref{sec:classcom}.
\begin{theorem}[Communication bounds] \label{thm:cut_set}
	The following bounds hold for the LC-QMAC$(\mathbb{F}_d, K,{\bm V}_1, \cdots, {\bm V}_K)$, 
	{\small
	\begin{align} \label{eq:cut_set_1}
		\Delta_{[K]} \geq r_{[K]},\\
		2\Delta_{\mathcal{K}} \geq r_{\mathcal{K}}, &&\forall \mathcal{K}\subseteq [K].\label{eq:cut_set_2}
	\end{align}
	}%
\end{theorem}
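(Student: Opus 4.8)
The plan is to establish both bounds via cut-set (min-cut) arguments, separating the $K$ transmitters into two cooperating groups and invoking Facts 1 and 2 from Section~\ref{sec:classcom}. The key observation is that each bound corresponds to a different cut and a different entanglement assumption at the receiver side.

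For the first bound \eqref{eq:cut_set_1}, I would group all $K$ transmitters together as a single (fully cooperating) super-transmitter facing Bob. Since Bob holds no prior entanglement with the transmitters in this cut (the transmitters share entanglement only among themselves), Fact~1 applies: the total quantum system delivered to Bob, of dimension $\prod_{k} \delta_k = d^{L\sum_k \Delta_k}$, can carry at most $\log_d(\prod_k \delta_k) = L\Delta_{[K]}$ dits of classical information. The decoded output $F^{[L]}$ must be recoverable with certainty, and under i.i.d.\ uniform inputs its entropy is $L \cdot \rk({\bf V}_{[K]}) = L\, r_{[K]}$ dits (the rank counts the number of $\mathbb{F}_d$-independent linear combinations actually carried by $F$). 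Dividing by $L$ yields $\Delta_{[K]} \geq r_{[K]}$.

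For the second family of bounds \eqref{eq:cut_set_2}, I would fix $\mathcal{K}\subseteq[K]$ and make a cut that isolates the transmitters in $\mathcal{K}$ from the rest. Here is the crucial point: the transmitters in $[K]\setminus\mathcal{K}$ together with Bob can be regarded as a single receiving party, and their shared entanglement with the $\mathcal{K}$-group can be treated as unlimited receiver-side entanglement. Hence Fact~2 applies to the qudits sent by the $\mathcal{K}$-transmitters, giving a factor of $2$: those $\sum_{k\in\mathcal{K}}\delta_k$-dimensional systems carry at most $2\log_d\big(\prod_{k\in\mathcal{K}}\delta_k\big) = 2L\Delta_{\mathcal{K}}$ dits. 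Against this I need a lower bound on the information that the $\mathcal{K}$-group must convey. Granting the complementary inputs $W_{[K]\setminus\mathcal{K}}$ as side information to Bob, he must still recover the contribution of the $\mathcal{K}$-inputs, whose conditional entropy per instance is exactly $\rk({\bf V}_{\mathcal{K}})$ dits since ${\bf V}_{\mathcal{K}}$ has the stated rank and the inputs are independent and uniform. This gives $2\Delta_{\mathcal{K}} \geq r_{\mathcal{K}}$.

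\emph{The main obstacle} is making the information-theoretic accounting rigorous in the quantum setting rather than hand-waving the cut. Specifically, I must carefully justify that recovering $F^{[L]}$ forces the relevant quantum system to carry the claimed classical entropy, which requires a Holevo-type bound relating the accessible information (via the POVM $\{\Lambda_y\}$ and the correctness condition \eqref{eq:correctness}) to the dimension of the transmitted subsystems, and must correctly model the intra-group entanglement as a receiver-side resource that licenses the superdense-coding factor of $2$ in Fact~2 but not in Fact~1. The genuinely novel converse bounds needed for the vector problem (the ones beyond cut-set, alluded to as Theorem~\ref{thm:new_bounds}) are \emph{not} part of this statement; Theorem~\ref{thm:cut_set} is explicitly framed as the baseline that follows from existing point-to-point results, so the expected difficulty is purely in the clean reduction from the QMAC cut to the point-to-point channel, not in any new inequality.
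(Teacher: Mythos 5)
Your proposal is correct and follows essentially the same route as the paper: the paper's own proof is exactly the cut-set reduction you describe, formalized by bounding $L\,r_{\mathcal{K}} \leq I(W_{\mathcal{K}};Q_{[K]}\mid W_{[K]\setminus\mathcal{K}})$ via Holevo's bound, using the no-communication theorem to discard $Q_{[K]\setminus\mathcal{K}}$, and obtaining the factor of $2$ in \eqref{eq:cut_set_2} from the Araki--Lieb triangle inequality (the rigorous counterpart of treating intra-group entanglement as a receiver-side resource). The obstacle you flag is precisely where the paper's work lies, and those are the standard tools that close it.
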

\noindent This theorem essentially follows from the known capacity results of quantum communication channels (e.g., \cite{bennett_shor_capacity}) together with a cut-set argument in network coding (e.g., \cite{Appuswamy1}). A formal proof is provided in Section \ref{proof:cut_set}. The following discussion elaborates upon the cut-set argument. 
\begin{enumerate}
	\item Consider Alice$_1$ -- Alice$_K$ together as one transmitter that has all the data and Bob as the receiver. The receiver must be able to recover ${\bm V}_1W_1^{[L]} + {\bm V}_2W_2^{[L]} +\cdots+ {\bm V}_KW_K^{[L]}$, which is  $L \times r_{[K]}$ dits of information. According to Fact 1 in Section \ref{sec:classcom}, $\log \delta_1+\log \delta_2+\cdots+\log \delta_K \geq L \times r_{[K]} \implies \Delta_{[K]}\geq r_{[K]}$. This gives us the bound  \eqref{eq:cut_set_1}.
	\item Let $\mathcal{K} \subseteq [K]$. Consider the Alices with indices in $\mathcal{K}$ collectively as the transmitter, and the rest of the Alices joining Bob together as the receiver (making their data and entangled quantum resource available to Bob for free). Then the receiver must be able to recover $\sum_{k\in \mathcal{K}}{\bm V}_kW_k^{[L]}$ from the merged transmitter. Note that what the receiver recovered constitutes $L \times r_{\mathcal{K}}$ dits of information. According to Fact 2 in Section \ref{sec:classcom}, we have  $2\sum_{k\in [K]}\log \delta_k \geq L \times r_{\mathcal{K}} \implies 2\Delta_{\mathcal{K}} \geq  r_{\mathcal{K}}$, which is the bound \eqref{eq:cut_set_2}.
\end{enumerate}

Next, as the first significant contribution of this work, we present the following stronger converse bounds.
\begin{theorem}[Multiparty computation bounds] \label{thm:new_bounds}
	Let $\{\mathcal{K}_1,\mathcal{K}_2,$ $ \cdots, $ $\mathcal{K}_T\}$ be a partition of $[K]$. Then the following bounds hold,
	{\small
	\begin{align}
		 \forall &T\geq 1,\notag \\
		&2\Delta_{[K]} \geq s_{\mathcal{K}_1} +  r_{\mathcal{K}_1}+ r_{\mathcal{K}_2}+ \cdots + r_{\mathcal{K}_T}, \label{eq:new_bound_1}\\
		 \forall & T\geq 2, \notag \\
		&2\big(\Delta_{\mathcal{K}_1} + \Delta_{\mathcal{K}_2} \big) 
		+ 4\big( \Delta_{\mathcal{K}_3}+\cdots \Delta_{\mathcal{K}_T}\big)		\notag \\
		& \geq   \big(s_{\mathcal{K}_1} + s_{\mathcal{K}_2}\big) +  \big( r_{\mathcal{K}_1}+ r_{\mathcal{K}_2}\big) +2 \big(r_{\mathcal{K}_3} + \cdots + r_{\mathcal{K}_T} \big). \label{eq:new_bound_2}
	\end{align} 
	}%
\end{theorem}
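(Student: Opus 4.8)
The plan is to recast the statement as a family of von Neumann entropy inequalities for a classical-quantum state and then assemble them using the chain rule and strong subadditivity. First I would fix the data streams to be mutually independent and uniform, group them according to the partition by setting $U_t = W_{\mathcal{K}_t}$, and form the cq-state on $WQ$ with $Q=Q_1\cdots Q_K$; I would also purify the shared initial entanglement by a reference $R$ that no encoder touches, so that $R$ is statistically independent of $W$ with a fixed marginal. In this language the combinatorial data become information measures: the ranks translate as $Lr_{\mathcal{K}_t} = H(F\mid U_{[T]\setminus\{t\}})$ and $Lr_{[K]} = H(F)$, the exclusive ranks as $Ls_{\mathcal{K}_t} = I(F;U_t)$, and correctness gives $H(F\mid Q)=0$, so that $F$ is a deterministic function of Bob's received system.

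Second, I would isolate the two elementary capacity mechanisms already behind Theorem~\ref{thm:cut_set} and phrase them as per-group bounds. The no-entanglement (Fact~1) mechanism is $I(X;A)\le \log_d|A|$, obtained from $H(A\mid X)\ge 0$; the entanglement-assisted (Fact~2) mechanism is the superdense-coding inequality $I(U_t; QR\mid C)\le 2\log_d|Q_{\mathcal{K}_t}| = 2L\Delta_{\mathcal{K}_t}$, valid whenever, conditioned on a context $C$, only group $t$ acts on $Q_{\mathcal{K}_t}$ while the complementary system $Q_{[K]\setminus\mathcal{K}_t}R$ is independent of $U_t$; here the factor of $2$ comes precisely from $H(Q_{\mathcal{K}_t}\mid \cdot\,U_t)\ge -\log_d|Q_{\mathcal{K}_t}|$. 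The observation that drives the whole proof is that these two slacks compete: the budget $2\log_d|Q_{\mathcal{K}_t}|$ can be spent on the conditional term (which yields $r_{\mathcal{K}_t}$ after $I(U_t;QR\mid U_{-t})\ge I(U_t;F\mid U_{-t})$) or on the unconditional term (which yields $s_{\mathcal{K}_t}$ after $I(U_t;QR)\ge I(U_t;F)$), but not fully on both.

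Third, the heart of the argument is to combine these atoms across the entire partition so that the distinguished group $\mathcal{K}_1$ contributes $s_{\mathcal{K}_1}$ \emph{in addition} to the full $\sum_t r_{\mathcal{K}_t}$, all within the single factor-two budget $2\Delta_{[K]}$. Since the group dimensions multiply, $\sum_t 2\log_d|Q_{\mathcal{K}_t}| = 2\log_d|Q| = 2L\Delta_{[K]}$, so summing the conditional Fact~2 bounds already yields $2\Delta_{[K]}\ge \sum_t r_{\mathcal{K}_t}$; the real work is to extract the extra $s_{\mathcal{K}_1}$ without enlarging the budget. I would do this with a single tailored chain-rule expansion in which $\mathcal{K}_1$ is read out first and unconditionally (supplying $s_{\mathcal{K}_1}$ through the reference-free part of $Q_{\mathcal{K}_1}$) while the remaining groups are read out conditioned on all other data, and then invoke strong subadditivity on the reference $R$ to recover the residual $r_{\mathcal{K}_1}$ from the entanglement $\mathcal{K}_1$ must still expend. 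For \eqref{eq:new_bound_2} I would run the same scheme with two distinguished groups $\mathcal{K}_1,\mathcal{K}_2$ handled by the Fact~1/Fact~2 split while the ordinary groups enter in ``two copies,'' which is what produces coefficient $4$ on $\Delta_{\mathcal{K}_3},\dots,\Delta_{\mathcal{K}_T}$ against coefficient $2$ on their ranks.

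I expect the genuine obstacle to be exactly this joint allocation of the entanglement budget. The tempting per-group target $2\Delta_{\mathcal{K}}\ge s_{\mathcal{K}}+r_{\mathcal{K}}$ is in fact \emph{false} --- it already fails for $\mathcal{K}=\{3\}$ in Toy Example~3, where it would force $\Delta_3\ge 3/2$ although $\Delta_3=1$ is achievable --- so the $s_{\mathcal{K}_1}$ term cannot be charged to $\mathcal{K}_1$ alone and its cost must be shared across \emph{all} groups' downloads. Making the reference-system bookkeeping precise enough to expose this sharing, while keeping every coefficient exactly right (including the $2$-versus-$4$ split), is the delicate step, and is presumably why only one or two distinguished groups can be accommodated and the $K\ge 4$ case stays open. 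I would validate the final inequalities on the degenerate case $T=1$, where \eqref{eq:new_bound_1} must collapse to $\Delta_{[K]}\ge r_{[K]}$ because $s_{[K]}=r_{[K]}$, and against Toy Example~3, where the partition $\{3\},\{1\},\{2\}$ must reproduce $\Delta_1+\Delta_2+\Delta_3\ge 5/2$; I would also confirm the construction is symmetric under relabeling, so that any part may serve as $\mathcal{K}_1$.
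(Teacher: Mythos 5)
Your setup is sound and matches the paper's: uniform independent data, the translations $Ls_{\mathcal{K}}=I(F;W_{\mathcal{K}})$ and $Lr_{\mathcal{K}}=I(W_{\mathcal{K}};F\mid W_{[K]\setminus\mathcal{K}})$, Holevo for the ``Fact~1/Fact~2'' mechanisms, the observation that the per-group bound $2\Delta_{\mathcal{K}}\ge s_{\mathcal{K}}+r_{\mathcal{K}}$ is false so the cost of $s_{\mathcal{K}_1}$ must be spread over all groups, and the correct accounting of why ordinary groups pick up coefficient $4$ in \eqref{eq:new_bound_2}. But the proof has a gap exactly where you say the ``heart of the argument'' lies: the single tailored chain-rule expansion plus ``strong subadditivity on the reference $R$'' that is supposed to deliver $s_{\mathcal{K}_1}+\sum_t r_{\mathcal{K}_t}$ within the budget $2\Delta_{[K]}$ is never written down, and as stated it cannot be checked that it produces the right coefficients. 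A plan that flags the delicate step and then asserts it will go through is not yet a proof of that step.

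For comparison, here is what the paper actually does at that point (Lemma \ref{lem:comp2}), and it needs neither a purifying reference nor SSA. Holevo gives $Ls_{\mathcal{K}_1}\le H(Q_{[K]})-H(Q_{[K]}\mid W_{\mathcal{K}_1})$ (Lemma \ref{lem:comp1}); the chain rule expands $H(Q_{[K]}\mid W_{\mathcal{K}_1})=H(Q_{\mathcal{K}_1}\mid W_{\mathcal{K}_1})+\sum_{i\ge 2}H(Q_{\mathcal{K}_i}\mid Q_{\mathcal{K}_1\cup\cdots\cup\mathcal{K}_{i-1}},W_{\mathcal{K}_1})$, and after dropping conditioning these chain terms exactly absorb the ``first halves'' $H(Q_{\mathcal{K}_i}\mid Q_{[K]\setminus\mathcal{K}_i},W_{[K]\setminus\mathcal{K}_i})$ of the Araki--Lieb bounds $Lr_{\mathcal{K}_i}\le H(Q_{\mathcal{K}_i}\mid Q_{[K]\setminus\mathcal{K}_i},W_{[K]\setminus\mathcal{K}_i})+H(Q_{\mathcal{K}_i}\mid W_{\mathcal{K}_i})$ from \eqref{eq:conv_1_4} (with Lemma \ref{lem:no_communi} supplying $H(Q_{\mathcal{K}_i}\mid W_{[K]})=H(Q_{\mathcal{K}_i}\mid W_{\mathcal{K}_i})$). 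What survives is $H(Q_{[K]})-H(Q_{\mathcal{K}_1}\mid W_{\mathcal{K}_1})+\sum_{i\ge 2}H(Q_{\mathcal{K}_i}\mid W_{\mathcal{K}_i})$, and the leftover $-H(Q_{\mathcal{K}_1}\mid W_{\mathcal{K}_1})$ --- a possibly negative quantity measuring $\mathcal{K}_1$'s entanglement --- cancels the $+H(Q_{\mathcal{K}_1}\mid W_{\mathcal{K}_1})$ in the Araki--Lieb bound for $r_{\mathcal{K}_1}$; that cancellation is precisely the budget-sharing you intuited, and every remaining entropy is bounded by its log-dimension to give \eqref{eq:new_bound_1}. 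Bound \eqref{eq:new_bound_2} is then just this lemma applied twice, with $\mathcal{K}_1$ and $\mathcal{K}_2$ each distinguished once, and the two instances added. Your purification route is plausible (Araki--Lieb is itself provable by purification plus subadditivity), but to close the gap you would need to exhibit the explicit chain of inequalities and verify the cancellation of the group-$\mathcal{K}_1$ conditional entropy, which is the one genuinely nontrivial identity in the whole proof.
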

\noindent The proof of Theorem \ref{thm:new_bounds} is presented in Section \ref{proof:new_bounds}. Note that \eqref{eq:cut_set_1} is recovered as a special case of \eqref{eq:new_bound_1} by setting $T=1$, $\mathcal{K}_1=[K]$ which corresponds to $r_{[K]} = s_{[K]}$. Next let us illustrate the theorem with a couple of toy examples.\\

\subsubsection{Toy Example 4} 
To see how the converse bounds from Theorem \ref{thm:new_bounds} can be significantly stronger than those from  Theorem \ref{thm:cut_set}, consider the following example. Suppose $K\geq 2$ and ${\bm V}_1 = I_{K\times K}, ~ {\bm V}_k = {\bm D}_1, \forall k \in \{2,3,\cdots,K\}$, where $I_{K\times K}$ denotes the $K\times K$ identity matrix and ${\bm D}_1$ is the first column of $I_{K\times K}$.\footnote{For example, this setting includes the case of $K=3$ transmitters, namely  Alice$_1$, Alice$_2$, Alice$_3$, who have data $(A,B,C), (D),(E)$, respectively, and the receiver (Bob) desires the vector $(A+D+E, B,C)$.} It is not difficult to verify that the best bound implied by Theorem \ref{thm:cut_set} for the total download cost $\Delta_{[K]}$ is $\Delta_{[K]} \geq K$, whereas Theorem \ref{thm:new_bounds} implies $\Delta_{[K]} \geq 3K/2-1$. Thus, we note that the gap between the two bounds can be of the order of $K$. In other words, the additive gap between the baseline cut-set bounds of Theorem \ref{thm:cut_set} and the optimal value of the sum-download cost $\Delta_{[K]}$, is unbounded in general.

\subsubsection{Toy Example 5}
Consider an LC-QMAC with $K=4$ transmitters, namely Alice$_k$, $k\in[4]$.  Each Alice$_k$ has data $(x_k,z_k)$, say all symbols in $\mathbb{F}_3$, and sends one qudit $(d=3)$ to Bob. Then is  it possible for Bob to obtain 
{\small
\begin{align*}
{\bm y} =\bbsmatrix{x_1\\ x_2+x_3+ x_4\\ z_1\\ z_2+z_3+z_4}= {\bm M}\bbsmatrix{{\bm x}\\{\bm z}},
\end{align*} 
}%
where
{\small
\begin{align}
		&\hspace{1.5cm} {\bm M} = \bbsmatrix{
			1 & 0 & 0 & 0 & 0 & 0 & 0 & 0 \\
			0& 1 & 1 & 1 & 0 & 0 & 0 & 0 \\
			0 & 0 & 0 & 0 & 1 & 0 & 0 & 0\\
			0 & 0 & 0 & 0 & 0 & 1 & 1 & 1 \\
		},\notag \\
	&{\bm x} = [x_1,x_2,x_3,x_4]^\top, ~{\bm z} = [z_1,z_2,z_3,z_4]^\top,
\end{align}
}%
by measuring the four qudits? We cannot immediately construct such an $N$-sum box protocol because this ${\bm M}$ does not satisfy the SSO condition. But could  this be achieved through some other construction? Theorem \ref{thm:cut_set} does not preclude the existence of such a construction because the constraints \eqref{eq:cut_set_1} and \eqref{eq:cut_set_2} are not violated. However, Theorem \ref{thm:new_bounds} shows that such a computation is not possible by \emph{any} construction, i.e., it violates the laws of quantum physics. To see this, consider the $T=4$ way partition $(\mathcal{K}_1,\mathcal{K}_2,\mathcal{K}_3,\mathcal{K}_4)=(\{1\},\{2\},\{3\},\{4\})$. We have $s_{\{1\}}=2$ and $r_{\{t\}}=2$ for $t=1,2,3,4$, so  Condition \eqref{eq:new_bound_1} in Theorem \ref{thm:new_bounds} implies that $2\Delta_{[4]}\geq 10$, i.e., at least a total of $5$ qudits must be sent from the four Alices to Bob in order for Bob to recover such an output function.

\subsection{Capacity for $K=3$}
As the  main result of this work, we now characterize the capacity for LC-QMAC when $K=3$,  establishing in the process that the bounds from Theorem \ref{thm:cut_set} and Theorem \ref{thm:new_bounds} together provide a tight converse.
\begin{theorem} \label{thm:main_K3}
	For the LC-QMAC problem $(K=3, \mathbb{F}_d, {\bm V}_1,$ $ {\bm V}_2, {\bm V}_3)$, the optimal cost region $\mathcal{D}^*$ is the set of cost tuples $(\Delta_1,\Delta_2,\Delta_3) \in \mathbb{R}^3$ such that
	{\small
	\begin{align}  \label{eq:region}
	\left\{
	\begin{array}{rl}
		2\Delta_k \hspace{-0.2cm} & \geq r_k , ~\forall k\in [3]\\
		\Delta_{[3]} \hspace{-0.2cm} & \geq r_{\{1,2,3\}} \\
		2\Delta_{[3]} \hspace{-0.2cm} & \geq r_1+r_2+r_3 + s_k, ~\forall k\in [3]\\
		2\Delta_{[3]} + 2\Delta_k \hspace{-0.2cm} & \geq r_1+r_2+r_3+r_k \\ &~~~~+s_1+s_2 + s_3 - s_k, ~\forall k \in  [3]  
	\end{array}\right..
	\end{align}
	}%
\end{theorem}

\noindent Note that the first two bounds match the converse bounds in Theorem 1, and the last two bounds match the converse bounds in Theorem 2, when applied to $K=3$. The proof is divided into two parts. The direct part (achievability) is proved in Section \ref{proof:achi}. The converse is proved in Section \ref{proof:converse}. According to Theorem \ref{thm:main_K3}, $\mathfrak{D}^*$ is characterized by $10$ linear inequalities on $(\Delta_1, \Delta_2, \Delta_3)$ that appear in  Condition  \eqref{eq:region}, and thus the region $\mathfrak{D}^*$ is a polyhedron. For the achievability proof, it suffices to show that each of the corner points of the polyhedron is achievable, because the achievability of all other points then follows from a standard time-sharing argument. For the converse, we shall show that all 10 bounds hold for the cost tuple achieved by \emph{any} feasible coding scheme, based on Theorem \ref{thm:cut_set} and Theorem \ref{thm:new_bounds}.

\begin{remark} \label{rem:impossibility}
To see Toy Example $3$ in terms of the notation used for the problem formulation, note that we have $W_1 = A,W_2 = B,W_3 = [C,D]^\top$ and $f(A,B,C,D) = [A+B+C,D]^\top$. This corresponds to,
{\small
\begin{align}
	&{\bm V}_1 = \bbsmatrix{1\\0}, ~{\bm V}_2 = \bbsmatrix{1\\0}, ~{\bm V}_3 = \bbsmatrix{1&0\\0&1},  \notag \\
	&\mbox{and}~\begin{bmatrix}
			\rk({\bm V}_1)\\
			\rk({\bm V}_2)\\
			\rk({\bm V}_3)\\
			\rk([{\bm V}_1,{\bm V}_2])\\
			\rk([{\bm V}_1,{\bm V}_3])\\
			\rk([{\bm V}_2,{\bm V}_3])\\
			\rk([{\bm V}_1, {\bm V}_2, {\bm V}_3])
		\end{bmatrix}
		=
		\begin{bmatrix}
			1\\1\\2\\1\\2\\2\\2
		\end{bmatrix},
\end{align}
}%
which, by Theorem \ref{thm:main_K3}, produces the region $\mathfrak{D}^*$ specified in \eqref{eq:ex3}, and illustrated in Fig. \ref{fig:ex3}.
\end{remark}

\begin{remark} \label{rem:symmetric}
In the symmetric case where $\rk({\bm V}_1)=\rk({\bm V}_2)=\rk({\bm V}_3)\triangleq r_1$, $\rk([{\bm V}_1, {\bm V}_2])=\rk([{\bm V}_2, {\bm V}_3])$ $ =\rk([{\bm V}_3, {\bm V}_1]) \triangleq r_2$, and $\rk([{\bm V}_1, {\bm V}_2, {\bm V}_3])\triangleq r_3$, the optimal value of the total-download cost from Theorem \ref{thm:main_K3} is found to be $\max \{1.5r_1 + 0.75(r_3 - r_2), r_3\}$.
\end{remark}

\subsubsection{Toy Example 6}
As one more example, consider $K=3$ transmitters, and let $(A,B,C,D,E,F,G,H,I)$ be $9$ variables in a finite field $\mathbb{F}_d$. Let 
{\small
$$W_1 = [A,D,G]^\top, W_2 = [B,E,H]^\top, W_3 = [C,F,I]^\top$$} and {\small $$f(A,B,\cdots, I) = [A+B+C,~D+E+F,~ G,~H,~I]^\top.$$} From this, we obtain,
{\small
\begin{align}
	{\bm V}_1 = \bbsmatrix{1 & 0 & 0\\0 & 1 & 0\\0 & 0 & 1\\0 & 0 & 0\\ 0 & 0 & 0}, {\bm V}_2 = \bbsmatrix{1 & 0 & 0\\0 & 1 & 0\\0 & 0 & 0\\0 & 0 & 1\\0 & 0 & 0}, {\bm V}_3 = \bbsmatrix{1 & 0 & 0\\0 & 1 & 0\\0 & 0 & 0\\0 & 0 & 0\\0 & 0 & 1},
\end{align}
}%
{\small
\begin{align}
	\begin{bmatrix}
			\rk({\bm V}_1)\\
			\rk({\bm V}_2)\\
			\rk({\bm V}_3)\\
			\rk([{\bm V}_1,{\bm V}_2])\\
			\rk([{\bm V}_1,{\bm V}_3])\\
			\rk([{\bm V}_2,{\bm V}_3])\\
			\rk([{\bm V}_1, {\bm V}_2, {\bm V}_3])
		\end{bmatrix}
		=
		\begin{bmatrix}
			3\\3\\3\\4\\4\\4\\5
		\end{bmatrix},
\end{align}
}%
and
{\small
\begin{align}
	\mathfrak{D}^* = 
	\left\{
	\begin{bmatrix}
		\Delta_1\\\Delta_2\\\Delta_3
	\end{bmatrix}
	\in \mathbb{R}^3
	\left|
	\begin{array}{l}
		\Delta_1 \geq 3/2\\
		\Delta_2 \geq 3/2\\
		\Delta_3 \geq 3/2\\
		2\Delta_1+\Delta_2+\Delta_3 \geq 7\\
		\Delta_1+2\Delta_2+\Delta_3 \geq 7\\
		\Delta_1+\Delta_2+2\Delta_3 \geq 7
	\end{array}
	\right.
	\right\},
\end{align}
}%
which is illustrated in Fig. \ref{fig:ex6}.

\begin{figure}[htbp]
\center	
\includegraphics{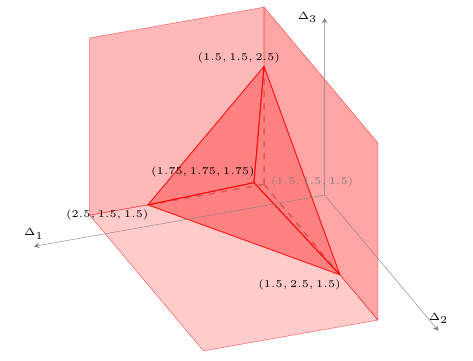}
\caption{$\mathfrak{D}^*$ for Toy Example 6.}\label{fig:ex6}
\end{figure}

\section{Proof of Converse Bounds}  \label{proof:converse}
In this section we present the proof for Theorem \ref{thm:cut_set} and Theorem \ref{thm:new_bounds}.
Consider any feasible LC-QMAC coding scheme with batch size $L$. Since the scheme must be correct for all realizations of 
{\small
\begin{align*}
	&(W_1^{[L]}, W_2^{[L]}, \cdots, W_K^{[L]})=(w_1,w_2,\cdots, w_K) \notag \\
	&\hspace{2cm} \in \mathbb{F}_d^{m_1\times L}\times\mathbb{F}_d^{m_2\times L}\times \cdots \times \mathbb{F}_d^{m_K\times L}, 
\end{align*}
}%
it must be correct even under the additional assumption that $(W_1^{[L]}, W_2^{[L]}, \cdots, W_K^{[L]})$ are generated uniformly  in $\mathbb{F}_d^{m_1\times L}\times\mathbb{F}_d^{m_2\times L}\times \cdots \times \mathbb{F}_d^{m_K\times L}$. Note that this assumption implies that $W_1^{[L]}, W_2^{[L]},  \cdots, W_K^{[L]}$ are independent. For compact notation, in the remainder of this section, we omit the superscript `$[L]$' over the data streams. Let $\rho$ denote the state of the joint classical-quantum system $W_1W_2\cdots W_K Q_1Q_2\cdots Q_K$ in the \textbf{encoding stage}.

\begin{lemma}[No-communication] \label{lem:no_communi}
	$I(W_{\mathcal{J}};W_{\mathcal{I}}Q_{\mathcal{I}})_\rho = 0$ for exclusive subsets $\mathcal{I}, \mathcal{J} \subseteq [K]$. Since conditional mutual information is non-negative, this directly implies that   $I(W_{\mathcal{J}};Q_{\mathcal{I}}\mid W_{\mathcal{I}})_{\rho} = 0$ and that $H(Q_{\mathcal{I}}\mid W_{\mathcal{I}}, W_{\mathcal{J}})_{\rho} = H(Q_{\mathcal{I}}\mid W_{\mathcal{I}})_{\rho}$.
\end{lemma}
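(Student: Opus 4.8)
The plan is to exploit two structural features of any feasible coding scheme: the mutual independence of the (uniform) data streams $W_1,\ldots,W_K$, and the strictly local, tensor-product nature of the encoders, where each $\mathcal{E}_k^{(w_k)}$ acts only on $Q_k$ and depends only on $w_k$. Writing $\mathcal{L} = [K]\setminus(\mathcal{I}\cup\mathcal{J})$ for the remaining indices, I would first record the joint classical-quantum state in the encoding stage explicitly as
\begin{align}
\rho = \sum_{w_1,\ldots,w_K} \Big(\prod_{k\in[K]} p(w_k)\Big)\, \ket{w_1}\bra{w_1}\otimes\cdots\otimes\ket{w_K}\bra{w_K}\otimes \rho^{(w_1,\ldots,w_K)},
\end{align}
where $\rho^{(w_1,\ldots,w_K)} = \mathcal{E}_1^{(w_1)}\otimes\cdots\otimes\mathcal{E}_K^{(w_K)}(\rho^{\init})$, and the factorization of the classical prior uses independence.

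The crux of the argument --- the step I expect to carry all the weight --- is the observation that each $\mathcal{E}_k^{(w_k)}$ is \emph{trace-preserving}. Consequently, for any $k$, tracing out the output system $Q_k$ after applying $\mathcal{E}_k^{(w_k)}$ coincides with tracing out $Q_k$ directly, eliminating all dependence on $w_k$; that is, $\Tr_{Q_k}\big[\mathcal{E}_k^{(w_k)}(\sigma)\big] = \Tr_{Q_k}[\sigma]$ for every state $\sigma$ on $Q_k$ and any adjoined systems. Applying this to each index $k\in\mathcal{J}\cup\mathcal{L}$ and invoking the tensor structure of the encoders, I would conclude that the reduced quantum state
\begin{align}
\sigma^{(w_{\mathcal{I}})} := \Tr_{Q_{\mathcal{J}} Q_{\mathcal{L}}}\big[\rho^{(w_1,\ldots,w_K)}\big] = \Big(\bigotimes_{k\in\mathcal{I}} \mathcal{E}_k^{(w_k)}\Big)\big(\Tr_{Q_{\mathcal{J}} Q_{\mathcal{L}}}[\rho^{\init}]\big)
\end{align}
depends only on $w_{\mathcal{I}}$. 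This is precisely the no-signaling principle: local operations performed by the Alices indexed by $\mathcal{J}\cup\mathcal{L}$ cannot alter the marginal state on $Q_{\mathcal{I}}$.

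With this in hand, I would trace out $W_{\mathcal{L}}$ as well and, again using independence of the classical priors, show that the marginal over $(W_{\mathcal{J}}, W_{\mathcal{I}}, Q_{\mathcal{I}})$ factorizes as
\begin{align}
\rho_{W_{\mathcal{J}} W_{\mathcal{I}} Q_{\mathcal{I}}} = \Big(\sum_{w_{\mathcal{J}}} p(w_{\mathcal{J}})\ket{w_{\mathcal{J}}}\bra{w_{\mathcal{J}}}\Big)\otimes\Big(\sum_{w_{\mathcal{I}}} p(w_{\mathcal{I}})\ket{w_{\mathcal{I}}}\bra{w_{\mathcal{I}}}\otimes\sigma^{(w_{\mathcal{I}})}\Big) = \rho_{W_{\mathcal{J}}}\otimes\rho_{W_{\mathcal{I}} Q_{\mathcal{I}}},
\end{align}
since the $W_{\mathcal{J}}$ register sits in its own tensor factor with a prior uncoupled from both $W_{\mathcal{I}}$ and $\sigma^{(w_{\mathcal{I}})}$. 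A product state has vanishing mutual information, giving $I(W_{\mathcal{J}}; W_{\mathcal{I}} Q_{\mathcal{I}})_\rho = 0$. The two stated consequences then follow mechanically: the chain rule $I(W_{\mathcal{J}}; W_{\mathcal{I}} Q_{\mathcal{I}})_\rho = I(W_{\mathcal{J}}; W_{\mathcal{I}})_\rho + I(W_{\mathcal{J}}; Q_{\mathcal{I}}\mid W_{\mathcal{I}})_\rho$ combined with non-negativity of (conditional) mutual information forces $I(W_{\mathcal{J}}; Q_{\mathcal{I}}\mid W_{\mathcal{I}})_\rho = 0$, and expanding this conditional mutual information as a difference of conditional entropies yields $H(Q_{\mathcal{I}}\mid W_{\mathcal{I}}, W_{\mathcal{J}})_\rho = H(Q_{\mathcal{I}}\mid W_{\mathcal{I}})_\rho$. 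The only genuinely nontrivial point is the trace-preserving observation in the second paragraph; everything else is bookkeeping.
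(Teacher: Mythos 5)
Your proposal is correct and follows the same route as the paper: the paper's proof simply notes that $W_{\mathcal{I}}$ and $W_{\mathcal{J}}$ are independent and then cites the no-communication theorem, which is exactly the fact you establish from first principles via trace preservation of the local encoders and the resulting factorization $\rho_{W_{\mathcal{J}}W_{\mathcal{I}}Q_{\mathcal{I}}}=\rho_{W_{\mathcal{J}}}\otimes\rho_{W_{\mathcal{I}}Q_{\mathcal{I}}}$. Your version is a self-contained expansion of the paper's citation, and the derivation of the two stated consequences via the chain rule matches the lemma's own remark.
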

\begin{proof}
	Since $W_1, W_2,\cdots, W_K$ are assumed independent, this implies $W_\mathcal{I}$ and $W_\mathcal{J}$ are independent. The lemma now follows from the no-communication theorem, e.g., see \cite{Peres_QIT}.
\end{proof}

\subsection{Proof of Theorem 1} \label{proof:cut_set}
We remind the readers of the definitions of ${\bm V}_{\mathcal{K}}, r_{\mathcal{K}}, s_{\mathcal{K}}$ and $\Delta_{\mathcal{K}}$ in Table \ref{tab:notations}.
Recall that in the decoding stage, Bob measures $Q_{[K]}$ to obtain $Y$, from which he gets $F^{[L]}$ (written as $F$ to simplify notation). 
Therefore, for any  $\mathcal{K}\subseteq [K]$,
{\small
\begin{align}
	&L\times r_{\mathcal{K}} \notag\\
	&=I(W_{\mathcal{K}};F\mid W_{[K] \setminus \mathcal{K}}) \\
	&= I(W_{\mathcal{K}};Y\mid W_{[K] \setminus \mathcal{K}}) \\
	&\leq I(W_{\mathcal{K}}; Q_{[K]} \mid W_{[K] \setminus \mathcal{K}})\label{eq:conv_1_1} \\
	&\leq H(Q_{[K]}) \label{eq:conv_1_2} \\
	&\leq \sum_{k\in [K]}\log \delta_k\\
	& \hspace{-0.3cm} \implies \Delta_{[K]} \geq r_{\mathcal{K}}
\end{align}
}%
Plugging in $\mathcal{K} = [K]$ proves \eqref{eq:cut_set_1}. Information measures on and after Step \eqref{eq:conv_1_1} are with respect to the state $\rho$.
Step \eqref{eq:conv_1_1} follows from Holevo bound, since Bob measures $Q_{[K]}$ to obtain $Y$. Step \eqref{eq:conv_1_2} is because $W_{\mathcal{K}}$ and $W_{[K]\setminus \mathcal{K}}$ are classical, and thus conditioning on any realization of $W_{[K]\setminus \mathcal{K}}$, the mutual information between $W_{\mathcal{K}}$ and $Q_{[K]}$ is not greater than $H(Q_{[K]})$.

Continuing from \eqref{eq:conv_1_1},
{\small
\begin{align}
	& L \times r_{\mathcal{K}} \notag \\
	& \leq I(W_{\mathcal{K}}; Q_{[K]} \mid W_{[K]\setminus \mathcal{K}}) \notag \\
	& = I(W_{\mathcal{K}}; Q_{\mathcal{K}}\mid Q_{[K]\setminus \mathcal{K}}, W_{[K]\setminus \mathcal{K}}) \label{eq:conv_1_3} \\
	& = H(Q_{\mathcal{K}}\mid Q_{[K]\setminus \mathcal{K}},W_{[K]\setminus \mathcal{K}}) - H(Q_{\mathcal{K}}\mid Q_{[K]\setminus \mathcal{K}},W_{[K]}) \\
	& \leq H(Q_{\mathcal{K}}\mid Q_{[K]\setminus \mathcal{K}},W_{[K]\setminus \mathcal{K}}) + H(Q_{\mathcal{K}}\mid W_{\mathcal{K}}) \label{eq:conv_1_4} \\
	& \leq 2H(Q_{\mathcal{K}}) \label{eq:conv_1_5} \\
	& \leq 2\sum_{k\in \mathcal{K}} \log \delta_k \\
	& \hspace{-0.3cm} \implies  2\Delta_{\mathcal{K}} \geq r_{\mathcal{K}}
\end{align}
}%
This proves \eqref{eq:cut_set_2}. Step \eqref{eq:conv_1_3} follows from Lemma \ref{lem:no_communi}, which implies $I(W_{\mathcal{K}};Q_{[K] \setminus \mathcal{K}} \mid W_{[K]\setminus \mathcal{K}}) = 0$. Step \eqref{eq:conv_1_4} follows from the Araki-Lieb triangle inequality, by conditioning on $W_{[K]}$, and  noting that $H(Q_{\mathcal{K}} \mid W_{[K]}) = H(Q_{\mathcal{K}} \mid W_{\mathcal{K}})$, as implied by Lemma \ref{lem:no_communi}. Step \eqref{eq:conv_1_5} holds because conditioning does not increase entropy.

\subsection{Proof of Theorem 2} \label{proof:new_bounds}
We need the following lemmas.
\begin{lemma} \label{lem:comp1}
For $\mathcal{K} \subseteq [K]$,
{\small
\begin{align}
	L \times s_{\mathcal{K}} \leq H(Q_{[K]}) - H(Q_{[K]}\mid W_{\mathcal{K}}).
\end{align}
}%
\end{lemma}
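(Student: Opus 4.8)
The plan is to observe that the right-hand side is precisely the quantum mutual information $I(W_{\mathcal{K}}; Q_{[K]})_\rho$, and then to lower-bound it by the \emph{classical} mutual information between $W_{\mathcal{K}}$ and the recovered function value $F$, which can be evaluated in closed form from the rank structure of the ${\bf V}_k$'s. The identity $s_{\mathcal{K}} = r_{[K]} - r_{[K]\setminus\mathcal{K}}$ strongly suggests that the quantity $L\times s_{\mathcal{K}}$ should equal $I(W_{\mathcal{K}}; F)$, so the whole lemma reduces to a data-processing step followed by an entropy computation.

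First I would rewrite the right-hand side. Since $W_{\mathcal{K}}$ is a classical register, $H(Q_{[K]})_\rho - H(Q_{[K]}\mid W_{\mathcal{K}})_\rho = I(W_{\mathcal{K}}; Q_{[K]})_\rho$. Next, recall that by the correctness condition Bob obtains $Y = F^{[L]}$ (abbreviated $F$) by applying the POVM $\{\Lambda_y\}$ to $Q_{[K]}$. Hence $W_{\mathcal{K}} \to Q_{[K]} \to Y$ and the Holevo bound (used in the same way as in Step \eqref{eq:conv_1_1} of the proof of Theorem \ref{thm:cut_set}) yields $I(W_{\mathcal{K}}; Q_{[K]})_\rho \geq I(W_{\mathcal{K}}; Y) = I(W_{\mathcal{K}}; F)$. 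It therefore suffices to show $I(W_{\mathcal{K}}; F) = L\times s_{\mathcal{K}}$.

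The final step is the entropy evaluation $I(W_{\mathcal{K}}; F) = H(F) - H(F\mid W_{\mathcal{K}})$, carried out under the working assumption (justified at the top of Section \ref{proof:converse}) that the data streams are independent and uniform. For the unconditioned term, $F^\ell = {\bf V}_{[K]} W^\ell_{[K]}$ with $W^\ell_{[K]}$ uniform, so each instance carries $\rk({\bf V}_{[K]}) = r_{[K]}$ dits and $H(F) = L\, r_{[K]}$. For the conditioned term, fixing $W_{\mathcal{K}}$ renders $\sum_{k\in\mathcal{K}}{\bf V}_k W_k$ a constant, leaving $F - \text{const} = {\bf V}_{[K]\setminus\mathcal{K}} W_{[K]\setminus\mathcal{K}}$ whose per-instance entropy is $\rk({\bf V}_{[K]\setminus\mathcal{K}}) = r_{[K]\setminus\mathcal{K}}$ by independence of $W_{[K]\setminus\mathcal{K}}$ from $W_{\mathcal{K}}$; thus $H(F\mid W_{\mathcal{K}}) = L\, r_{[K]\setminus\mathcal{K}}$. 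Subtracting gives $I(W_{\mathcal{K}}; F) = L(r_{[K]} - r_{[K]\setminus\mathcal{K}}) = L\times s_{\mathcal{K}}$, completing the chain.

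The only delicate point, and the place I would be most careful, is the two entropy evaluations: one must argue that uniformity of $W_{[K]}$ together with the (possibly rank-deficient) matrix ${\bf V}_{[K]}$ makes $F$ uniform on a space of dimension $r_{[K]}$ per instance, and likewise that conditioning on $W_{\mathcal{K}}$ leaves exactly the residual rank $r_{[K]\setminus\mathcal{K}}$. These are routine linear-algebra-over-$\mathbb{F}_d$ facts, but they are where the definition $s_{\mathcal{K}} = r_{[K]} - r_{[K]\setminus\mathcal{K}}$ is actually used, so I would state them explicitly rather than leave them implicit.
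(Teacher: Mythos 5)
Your proposal is correct and follows essentially the same route as the paper's proof: both reduce the claim to $I(W_{\mathcal{K}};F)=L\,s_{\mathcal{K}}$ via the rank identity $s_{\mathcal{K}}=r_{[K]}-r_{[K]\setminus\mathcal{K}}$, then apply Holevo's bound to pass from $Y=F$ to $Q_{[K]}$ (the paper simply writes the chain in the opposite direction, starting from $L\,s_{\mathcal{K}}$ and ending at $H(Q_{[K]})-H(Q_{[K]}\mid W_{\mathcal{K}})$). Your explicit flagging of the entropy evaluations $H(F)=L\,r_{[K]}$ and $H(F\mid W_{\mathcal{K}})=L\,r_{[K]\setminus\mathcal{K}}$ matches what the paper uses implicitly in the step $H(F)-H(\sum_{k\in[K]\setminus\mathcal{K}}{\bf V}_kW_k)$.
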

\begin{proof}
{\small
\begin{align}
	& L \times s_{\mathcal{K}} \notag \\
	& = L \times (r_{[K]} - r_{[K] \setminus \mathcal{K}}) \notag\\
	& = H(F) - H\left(\sum_{k\in [K] \setminus \mathcal{K}}{\bm V}_kW_k \right) \\
	& = H(F) - H(F\mid W_{\mathcal{K}}) \\
	& = I(F;W_{\mathcal{K}})\\
	& = I(Y;W_{\mathcal{K}})\\
	& \leq  I(Q_{[K]}; W_{\mathcal{K}}) \label{eq:comp1_1}  \\
	& = H(Q_{[K]}) - H(Q_{[K]}\mid W_{\mathcal{K}}) 
\end{align}
}%
Step \eqref{eq:comp1_1} follows from Holevo's bound.
\end{proof}

\begin{lemma} \label{lem:comp2}
	For $\{\mathcal{K}_1,\mathcal{K}_2, \cdots, \mathcal{K}_T\}$ a partition of $[K]$,
	{\small
	\begin{align}
		&L \times (s_{\mathcal{K}_1} + r_{\mathcal{K}_2} + \cdots + r_{\mathcal{K}_T})\notag \\
		& \leq H(Q_{[K]}) - H(Q_{\mathcal{K}_1}\mid W_{\mathcal{K}_1}) + \sum_{i=2}^T H(Q_{\mathcal{K}_i}\mid W_{\mathcal{K}_i}). \label{eq:comp2}
	\end{align}
	}%
\end{lemma}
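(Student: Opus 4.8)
The plan is to bootstrap from Lemma \ref{lem:comp1} applied to the distinguished block $\mathcal{K}_1$, which gives $L \cdot s_{\mathcal{K}_1} \leq H(Q_{[K]}) - H(Q_{[K]} \mid W_{\mathcal{K}_1})$, and then to peel off the remaining blocks $\mathcal{K}_2,\ldots,\mathcal{K}_T$ one at a time using a chain-rule decomposition of $H(Q_{[K]} \mid W_{\mathcal{K}_1})$. Since $Q_{[K]} = Q_{\mathcal{K}_1} Q_{\mathcal{K}_2}\cdots Q_{\mathcal{K}_T}$, the chain rule (conditioned on the classical register $W_{\mathcal{K}_1}$) gives
\[
	H(Q_{[K]} \mid W_{\mathcal{K}_1}) = H(Q_{\mathcal{K}_1} \mid W_{\mathcal{K}_1}) + \sum_{t=2}^{T} H(Q_{\mathcal{K}_t} \mid Q_{\mathcal{K}_1}\cdots Q_{\mathcal{K}_{t-1}}, W_{\mathcal{K}_1}).
\]
Substituting this into the Lemma \ref{lem:comp1} bound isolates the term $-H(Q_{\mathcal{K}_1}\mid W_{\mathcal{K}_1})$ appearing on the right-hand side of \eqref{eq:comp2}, and leaves me to account for each $r_{\mathcal{K}_t}$, $t\geq 2$, against one of the subtracted chain-rule terms.

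The key per-block claim I would establish is that, for every $t\geq 2$,
\[
	L \cdot r_{\mathcal{K}_t} \leq H(Q_{\mathcal{K}_t} \mid Q_{\mathcal{K}_1}\cdots Q_{\mathcal{K}_{t-1}}, W_{\mathcal{K}_1}) + H(Q_{\mathcal{K}_t} \mid W_{\mathcal{K}_t}).
\]
Summing this over $t=2,\ldots,T$ and adding the $s_{\mathcal{K}_1}$ bound makes every chain-rule term $H(Q_{\mathcal{K}_t}\mid Q_{\mathcal{K}_1}\cdots Q_{\mathcal{K}_{t-1}}, W_{\mathcal{K}_1})$ cancel, producing exactly \eqref{eq:comp2}. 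To prove the per-block claim I would reuse the cut-set computation from the proof of Theorem \ref{thm:cut_set}: by the Holevo bound and the no-communication Lemma \ref{lem:no_communi},
\[
	L \cdot r_{\mathcal{K}_t} = I(W_{\mathcal{K}_t}; F \mid W_{[K]\setminus \mathcal{K}_t}) \leq I(W_{\mathcal{K}_t}; Q_{\mathcal{K}_t} \mid Q_{[K]\setminus \mathcal{K}_t}, W_{[K]\setminus \mathcal{K}_t}),
\]
and then expand the right-hand side as $H(Q_{\mathcal{K}_t}\mid Q_{[K]\setminus\mathcal{K}_t}, W_{[K]\setminus\mathcal{K}_t}) - H(Q_{\mathcal{K}_t}\mid Q_{[K]\setminus\mathcal{K}_t}, W_{[K]})$, bounding the negative term via the Araki--Lieb triangle inequality by $H(Q_{\mathcal{K}_t}\mid W_{[K]}) = H(Q_{\mathcal{K}_t}\mid W_{\mathcal{K}_t})$ (the equality again by Lemma \ref{lem:no_communi}).

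The final bridge, which I expect to require the most care, is replacing the heavily-conditioned quantity $H(Q_{\mathcal{K}_t}\mid Q_{[K]\setminus\mathcal{K}_t}, W_{[K]\setminus\mathcal{K}_t})$ by the lightly-conditioned chain-rule term $H(Q_{\mathcal{K}_t}\mid Q_{\mathcal{K}_1}\cdots Q_{\mathcal{K}_{t-1}}, W_{\mathcal{K}_1})$. Since $t\geq 2$, the conditioning systems $Q_{\mathcal{K}_1}\cdots Q_{\mathcal{K}_{t-1}}$ and $W_{\mathcal{K}_1}$ form a subset of $Q_{[K]\setminus\mathcal{K}_t}$ and $W_{[K]\setminus\mathcal{K}_t}$, so the inequality
\[
	H(Q_{\mathcal{K}_t}\mid Q_{[K]\setminus\mathcal{K}_t}, W_{[K]\setminus\mathcal{K}_t}) \leq H(Q_{\mathcal{K}_t}\mid Q_{\mathcal{K}_1}\cdots Q_{\mathcal{K}_{t-1}}, W_{\mathcal{K}_1})
\]
follows from strong subadditivity, i.e. conditioning on additional (classical or quantum) systems does not increase entropy. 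I would verify the inclusion of conditioning systems under the fixed ordering $\mathcal{K}_1,\ldots,\mathcal{K}_T$ and confirm that the mixed classical-quantum conditioning is handled correctly; beyond this, no ingredient other than strong subadditivity, Araki--Lieb, the Holevo bound, and the no-communication lemma is needed.
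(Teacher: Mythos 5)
Your proposal is correct and follows essentially the same route as the paper: it combines Lemma \ref{lem:comp1} for the $s_{\mathcal{K}_1}$ term with the Araki--Lieb/Holevo bound \eqref{eq:conv_1_4} for each $r_{\mathcal{K}_t}$, weakens the conditioning from $(Q_{[K]\setminus\mathcal{K}_t}, W_{[K]\setminus\mathcal{K}_t})$ to $(Q_{\mathcal{K}_1}\cdots Q_{\mathcal{K}_{t-1}}, W_{\mathcal{K}_1})$ via strong subadditivity, and cancels against the chain-rule expansion of $H(Q_{[K]}\mid W_{\mathcal{K}_1})$. The only difference is organizational --- you isolate a per-block claim and telescope at the end, whereas the paper carries everything through a single chain of inequalities --- so no substantive comparison is needed.
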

\begin{proof}
According to Lemma \ref{lem:comp1} and \eqref{eq:conv_1_4}, we have
{\small
	\begin{align}
		& L \times (s_{\mathcal{K}_1} + r_{\mathcal{K}_2} + \cdots + r_{\mathcal{K}_T}) \notag \\
		& \leq H(Q_{[K]}) - H(Q_{[K]}\mid W_{\mathcal{K}_1}) \notag \\
		& ~~~~ + \sum_{i=2}^T \Big( H(Q_{\mathcal{K}_i} \mid Q_{[K]\setminus \mathcal{K}_i}, W_{[K]\setminus\mathcal{K}_i}) + H(Q_{\mathcal{K}_i}\mid W_{\mathcal{K}_i}) \Big) \label{eq:comp2_0}\\
		& \leq H(Q_{[K]}) - H(Q_{[K]}\mid W_{\mathcal{K}_1}) +  \sum_{i=2}^T H(Q_{\mathcal{K}_i}\mid W_{\mathcal{K}_i})  \notag \\
		& ~~~~ + \sum_{i=2}^T \Big( H(Q_{\mathcal{K}_i} \mid Q_{\mathcal{K}_1 \cup \cdots \cup \mathcal{K}_{i-1} \cup \mathcal{K}_{i+1} \cup \cdots \cup \mathcal{K}_T}, W_{\mathcal{K}_1}) \Big)   \label{eq:comp2_1}\\
		& \leq H(Q_{[K]}) - H(Q_{\mathcal{K}_1}\mid W_{\mathcal{K}_1}) +  \sum_{i=2}^T H(Q_{\mathcal{K}_i}\mid W_{\mathcal{K}_i}) \notag \\&~~~~- H(Q_{\mathcal{K}_2\cup \mathcal{K}_3 \cup \cdots \cup \mathcal{K}_T}\mid Q_{\mathcal{K}_1}, W_{\mathcal{K}_1}) \notag \\
		& ~~~~ + \sum_{i=2}^T \Big( H(Q_{\mathcal{K}_i} \mid Q_{\mathcal{K}_1 \cup \cdots \cup \mathcal{K}_{i-1}}, W_{\mathcal{K}_1}) \Big) \label{eq:comp2_2} \\
		& = H(Q_{[K]}) - H(Q_{\mathcal{K}_1}\mid W_{\mathcal{K}_1}) + \sum_{i=2}^T H(Q_{\mathcal{K}_i}\mid W_{\mathcal{K}_i}) \label{eq:comp2_3}
	\end{align}
	}%
where in Steps \eqref{eq:comp2_1} and \eqref{eq:comp2_2} we use the fact that conditioning does not increase entropy. Step \eqref{eq:comp2_3} follows from the chain rule for entropy.
\end{proof}
The series of inequalities that appear in the proofs of these two lemmas are not expected to be tight in general, but they suffice to derive the desired converse bounds, namely \eqref{eq:new_bound_1} and \eqref{eq:new_bound_2} in Theorem \ref{thm:new_bounds}, as shown next.

We proceed as follows. First, by Lemma \ref{lem:comp2} and \eqref{eq:conv_1_4}, we have
{\small
\begin{align}
	& L \times (s_{\mathcal{K}_1}+r_{\mathcal{K}_1}+r_{\mathcal{K}_2}+\cdots + r_{\mathcal{K}_T}) \notag \\
	& \leq H(Q_{[K]}) + H(Q_{\mathcal{K}_1} \mid Q_{\mathcal{K}_2 \cup \cdots \cup \mathcal{K}_T}, W_{\mathcal{K}_2 \cup \cdots \cup \mathcal{K}_T})\notag \\
	&~~~~+ H(Q_{\mathcal{K}_2}\mid W_{\mathcal{K}_2}) + \cdots + H(Q_{\mathcal{K}_T}\mid W_{\mathcal{K}_T}) \\
	& \leq 2 \sum_{k\in [K]} \log  \delta_k\\
	& \hspace{-0.3cm} \implies \eqref{eq:new_bound_1}
\end{align}
}%
Next, noting the symmetry in Lemma \ref{lem:comp2}, we have
{\small
\begin{align}
	&L \times (s_{\mathcal{K}_2} + r_{\mathcal{K}_1} +  r_{\mathcal{K}_3} + \cdots + r_{\mathcal{K}_T})\notag \\
	& \leq H(Q_{[K]}) - H(Q_{\mathcal{K}_2}\mid W_{\mathcal{K}_2}) +  H(Q_{\mathcal{K}_1} \mid W_{\mathcal{K}_1}) \notag \\
	&~~~~ +  H(Q_{\mathcal{K}_3} \mid W_{\mathcal{K}_3}) + \cdots + H(Q_{\mathcal{K}_T} \mid W_{\mathcal{K}_T}). \label{eq:comp2_sym}
\end{align}
}%
Adding \eqref{eq:comp2} and \eqref{eq:comp2_sym}, we obtain
{\small
\begin{align}
	& L \times \big(s_{\mathcal{K}_1} + s_{\mathcal{K}_2}\big) +  \big( r_{\mathcal{K}_1}+ r_{\mathcal{K}_2}\big) + 2 \big(r_{\mathcal{K}_3} + \cdots + r_{\mathcal{K}_T} \big) \notag \\
	& \leq 2H(Q_{[K]}) + 2\big( H(Q_{\mathcal{K}_3}\mid W_{\mathcal{K}_3}) + \cdots + H(Q_{\mathcal{K}_T}\mid W_{\mathcal{K}_T}) \big) \\
	& \leq 2\sum_{k\in \mathcal{K}_1\cup \mathcal{K}_2} \log \delta_{k} + 4\sum_{k\in \mathcal{K}_3\cup\cdots \cup \mathcal{K}_T} \log \delta_{k} \\
	& \hspace{-0.3cm} \implies \eqref{eq:new_bound_2}
\end{align}
}%
 
\section{Proof of Theorem 3: Achievability} \label{proof:achi}
\subsection{Standard form of the linear function}
Given the LC-QMAC problem specified by $(\mathbb{F}_d, K=3, {\bm V}_1, {\bm V}_2, {\bm V}_3)$, the function computed at Bob is by definition,
{\small
\begin{align} \label{eq:function_original}
	F = {\bm V}_1 W_1 + {\bm V}_2 W_2 + {\bm V}_3 W_3.
\end{align}
}%
According to \cite[Lemma 2]{Yao_Jafar_3LCBC}, there exist $\mathbb{F}_d$ matrices (with full column ranks and $m$ rows each) $$\{U_{123}, U_{12}, U_{13}, U_{23}, U_{1(2,3)}, U_{2(1,3)}, U_{3(1,2)}, U_{1}, U_{2}, U_{3}\}$$ such that,
{\small
\begin{enumerate}
	\item $\begin{bmatrix} U_{123} & U_{12} & U_{13} & U_{1(2,3)} & U_{1} \end{bmatrix}$ form a basis for the column span of ${\bm V}_1$;
	\item $\begin{bmatrix} U_{123} & U_{12} & U_{23} & U_{2(1,3)} & U_{2} \end{bmatrix}$ form a basis for the column span of ${\bm V}_2$;
	\item $\begin{bmatrix} U_{123} & U_{13} & U_{23} & U_{3(1,2)} & U_{3} \end{bmatrix}$ form a basis for the column span of ${\bm V}_3$;
	\item $\begin{bmatrix} U_{123} & U_{12} & U_{13} & U_{23} & U_{1(2,3)} & U_{2(1,3)} & U_{1} & U_{2} \end{bmatrix}$ form a basis for the column span of $[{\bm V}_1, {\bm V}_2]$;
	\item $\begin{bmatrix} U_{123} & U_{12} & U_{13} & U_{23} & U_{1(2,3)} & U_{3(1,2)} & U_{1} & U_{3} \end{bmatrix}$ form a basis for the column span of $[{\bm V}_1, {\bm V}_3]$;
	\item $\begin{bmatrix} U_{123} & U_{12} & U_{13} & U_{23} & U_{2(1,3)} & U_{3(1,2)} & U_{2} & U_{3} \end{bmatrix}$ form a basis for the column span of $[{\bm V}_2, {\bm V}_3]$;
	\item $\begin{bmatrix} U_{123} & U_{12} & U_{13} & U_{23} & U_{2(1,3)} & U_{3(1,2)} & U_{1} & U_{2} & U_{3} \end{bmatrix}$ form a basis for the column span of $[{\bm V}_1, {\bm V}_2, {\bm V}_3]$.
	\item $U_{1(2,3)}, U_{2(1,3)}$ and $U_{3(1,2)}$ have the same size and $U_{1(2,3)} = U_{2(1,3)} + U_{3(1,2)}$.
\end{enumerate}
}%
Let $n_*$ denote the number of columns of $U_*$, for $*\in \{1,2,3,12,13,23,123\}$. The number of columns for $U_{1(2,3)}$ (the same for $U_{2(1,3)}$ and $U_{3(1,2)}$) is denoted as $n_o$.

Recall that each ${\bm V}_k$ is an $m\times m_k$ matrix. Since we assume without loss of generality that $m_k\leq m$ and each ${\bm V}_k$ has full column rank, it follows that there exist invertible matrices $R_1, R_2, R_3$ such that
{\small
\begin{align}
	{\bm V}_1 &= \begin{bmatrix}
		U_{123} & U_{12} & U_{13} & U_{1(2,3)} & U_{1}
	\end{bmatrix}
	R_1,\\
	{\bm V}_2 &= \begin{bmatrix}
		U_{123} & U_{12} & U_{23} & U_{2(1,3)} & U_{2}
	\end{bmatrix}
	R_2,\\
	{\bm V}_3 &= \begin{bmatrix}
		U_{123} & U_{13} & U_{23} & U_{3(1,2)} & U_{3}
	\end{bmatrix}
	R_3.
\end{align}
}%
Thus, \eqref{eq:function_original} becomes
{\small
\begin{align} \label{eq:function_intermediate}
	&F = \begin{bmatrix}
		U_{123} & U_{12} & U_{13} & U_{1(2,3)} & U_{1}
	\end{bmatrix}
	(R_1W_1) \notag\\
	&~~~~ + \begin{bmatrix}
		U_{123} & U_{12} & U_{23} & U_{2(1,3)} & U_{2}
	\end{bmatrix}
	(R_2W_2) \notag\\
	&~~~~+ \begin{bmatrix}
		U_{123} & U_{13} & U_{23} & U_{3(1,2)} & U_{3}
	\end{bmatrix}
	(R_3W_3).
\end{align}
}%
$R_kW_k$ can be considered as the ($m_k$-dimensional) data available to Alice$_k$ for $k\in [3]$. It will be convenient to write $R_kW_k$ as,
{\small
\begin{align}
	R_1W_1 = \begin{bmatrix}
		A_{123} \\ A_{12} \\ A_{13} \\ A_{o} \\ A_{1}
	\end{bmatrix},
	R_2W_2 = \begin{bmatrix}
		B_{123} \\ B_{12} \\ B_{23} \\ B_{o} \\ B_{2}
	\end{bmatrix},
	R_3W_3 = \begin{bmatrix}
		C_{123} \\ C_{13} \\ C_{23} \\ C_{o} \\ C_{3}
	\end{bmatrix},
\end{align}
}%
where $A_{123}, A_{12}, A_{13}, \cdots, C_o, C_3$ are vectors with elements drawn in $\mathbb{F}_d$, with $X_{*}$ being an $n_*$-length vector for $X \in \{A,B,C\}$ and $*\in \{o,1,2,3,12,13,23,123\}$. Then, \eqref{eq:function_intermediate} becomes,
{\footnotesize
\begin{align}
	&F = \underbrace{\begin{bmatrix}
		U_{123} & U_{12} & U_{13} & U_{23} & U_{2(1,3)} & U_{3(1,2)} & U_{1} & U_{2} & U_{3}
	\end{bmatrix}}_{\bm U}\notag \\
	&\hspace{2.75cm} \times \begin{bmatrix}
	A_{123}+B_{123}+C_{123} \\
	A_{12}+B_{12} \\
	A_{13}+C_{13} \\
	B_{23}+C_{23} \\
	A_o+B_o \\
	A_o+C_o \\
	A_1\\
	B_2\\
	C_3
	\end{bmatrix}
\end{align}
}%
by noting that $U_{1(2,3)} = U_{2(1,3)} + U_{3(1,2)}$. Since ${\bm U}$ is a basis (and thus has full column rank), computing $F$ is equivalent to computing $\tilde{F}$, where,
{\small
\begin{align} \label{eq:function_standard}
	\tilde{F} = \begin{bmatrix}
	A_{123}+B_{123}+C_{123} \\
	A_{12}+B_{12} \\
	A_{13}+C_{13} \\
	B_{23}+C_{23} \\
	A_o+B_o \\
	A_o+C_o \\
	A_1\\
	B_2\\
	C_3
	\end{bmatrix},
\end{align}
}%
such that all $A_{*}$ symbols come from Alice$_1$, $B_{*}$ come from Alice$_2$, and $C_{*}$ come from Alice$_3$. 
Let us refer to the form in \eqref{eq:function_standard} as the \emph{standard} form of the linear computation for $K=3$. We will refer to the elements of $\tilde{F}$ as demands, that the achievable scheme will need to satisfy. For example, the achievable scheme should satisfy $n_{123}$ dimensions of Bob's demands along $A_{123}+B_{123}+C_{123}$.

Specifically, the standard form is composed of,
\begin{enumerate}
	\item $A_{123}+B_{123}+C_{123}$, which is an $n_{123}$-dimensional $3$-way sum and each term comes from a different Alice;
	\item $A_{12}+B_{12}$, which is an $n_{12}$-dimensional $2$-way sum of the inputs from Alice$_1$ and Alice$_2$; $A_{13}+C_{13}$, which is an $n_{13}$-dimensional $2$-way sum of the inputs from Alice$_1$ and Alice$_3$; $B_{23}+C_{23}$, which is an $n_{23}$-dimensional $2$-way sum of the inputs from Alice$_2$ and Alice$_3$, such that $A_{12}, A_{13}, B_{12}, B_{23}, C_{13}, C_{23}$ are different terms.
	\item $A_o+B_o$, which is an $n_{o}$-dimensional $2$-way sum of the inputs from Alice$_1$ and Alice$_2$; $A_o+C_o$, which is another $n_{o}$-dimensional $2$-way sum of the inputs from Alice$_1$ and Alice$_3$, such that the same $A_o$ appears in both $A_o+B_o$ and $A_o+C_o$. Since $A_o+B_o$ and $A_o+C_o$ always have the same dimension $n_o$, in the following they shall always be considered together.
	\item $A_1$, an $n_1$-dimensional vector from Alice$_1$; $B_2$, an $n_2$-dimensional vector from Alice$_2$, and $C_3$, an $n_3$-dimensional vector from Alice$_3$.
\end{enumerate}
With this form, we can  evaluate Theorem \ref{thm:main_K3} in terms of $\{n_*\mid * \in \{1,2,3,12,13,23,123,o\}\}$ as
{\small
\begin{align} \label{eq:region_standard}
		\mathfrak{D}^* = \left\{
		\begin{bmatrix}
			\Delta_1\\\Delta_2\\\Delta_3	
		\end{bmatrix}
 		\in \mathbb{R}^3 \left|
		{\bm A} 
		\begin{bmatrix}
			\Delta_1\\\Delta_2\\\Delta_3
		\end{bmatrix}
		\geq
		{\bm B} 
		\begin{bmatrix}
			n_{123}\\
			n_{12}\\
			n_{13}\\
			n_{23}\\
			n_o\\
			n_1\\
			n_2\\
			n_3
		\end{bmatrix}
		\right.
		\right\},
\end{align}
}%
where 
{\small
\begin{align} \label{eq:def_AB}
	{\bm A} = \begin{bmatrix}
			2 & 0 & 0 \\
			0 & 2 & 0 \\
			0 & 0 & 2 \\
			1 & 1 & 1 \\
			2 & 2 & 2 \\
			2 & 2 & 2 \\
			2 & 2 & 2 \\
			4 & 2 & 2 \\
			2 & 4 & 2\\
			2 & 2 & 4
		\end{bmatrix},
	{\bm B} = \begin{bmatrix}
		1 & 1 & 1 & 0 & 1 & 1 & 0 & 0\\
		1 & 1 & 0 & 1 & 1 & 0 & 1 & 0\\
		1 & 0 & 1 & 1 & 1 & 0 & 0 & 1\\
		1 & 1 & 1 & 1 & 2 & 1 & 1 & 1\\
		3 & 2 & 2 & 2 & 3 & 2 & 1 & 1\\
		3 & 2 & 2 & 2 & 3 & 1 & 2 & 1\\
		3 & 2 & 2 & 2 & 3 & 1 & 1 & 2\\
		4 & 3 & 3 & 2 & 4 & 2 & 2 & 2\\
		4 & 3 & 2 & 3 & 4 & 2 & 2 & 2\\
		4 & 2 & 3 & 3 & 4 & 2 & 2 & 2
	\end{bmatrix}.
\end{align}
}%
In the remainder of this section, the goal is to prove that  $(\mbox{RHS of } \eqref{eq:region_standard}) \subseteq \mathfrak{D}^*$.

\subsection{Building block protocols}
Let us list the building block protocols that will be used to establish the achievable region.
The first building block protocol is based on trivially treating qudits as classical dits (TQC). It is summarized as follows.

\begin{enumerate}[align=left]
	\item[\textbf{[TQC]:}] For any transmitter with $\mathbb{F}_d$ input $x$, by receiving one (encoded) qudit from that transmitter, the receiver can measure $x$ with certainty. This protocol is suitable for satisfying Bob's demands along certain dimensions of $A_1, B_2, C_3$ in \eqref{eq:function_standard}. Specifically, when applying TQC to satisfy certain dimensions of $A_1$, the protocol is referred to as \textbf{P1}. An amortized cost tuple $(1,0,0)$ is used for this protocol as with (Alice$_1$, Alice$_2$, Alice$_3$) sending $(1,0,0)$ qudit, one dimension of $A_1$ demand is satisfied for Bob.  Similarly, \textbf{P2} with amortized cost tuple $(0,1,0)$ refers to TQC for satisfying $B_2$, and \textbf{P3} with amortized cost tuple $(0,0,1)$ refers to TQC for satisfying a $C_3$ demand.
\end{enumerate}
The rest of the building block protocols are based on the $N$-sum box (Lemma \ref{lem:box}). Note that when a protocol utilizes multiple $N$-sum boxes, the total cost is simply the sum of the costs of the boxes that are used.

To apply Lemma \ref{lem:box} one must first specify the transfer matrix ${\bm M} = [{\bm M}_x, {\bm M}_z]$ with full row rank $N$ and ${\bm M}_x {\bm M}_z^\top = {\bm M}_x {\bm M}_z^\top$ (SSO property). For our purpose, we need the following two $N$-sum boxes.
\begin{enumerate}[align=left]
	\item[\textbf{Box 1}:]  A $2$-sum box with transfer matrix {\small ${\bm M}_1 = \begin{bmatrix} 1 & 1 & 0 & 0 \\ 0 & 0 & 1 & -1 \end{bmatrix}$}.
	\item[\textbf{Box 2}:] A $3$-sum box with transfer matrix {\small ${\bm M}_2 = \begin{bmatrix}
				1 & 1 & 1 & 0 & 0 & 0 \\
				0 & 0 & 0 & 1 & -1 & 0\\
				0 & 0 & 0 & 1 & 0 & -1
			\end{bmatrix}$}.
\end{enumerate}
 It is readily verified that ${\bm M}_1, {\bm M}_2$ satisfy the SSO property. Using Box 1 with transfer matrix ${\bm M}_1$, we develop the following protocols.
\begin{enumerate}[align=left]
	\item[\textbf{[2-way-sums]:}]  For $2$ transmitters with $\mathbb{F}_d$ inputs $(x_1,z_1)$, and $(x_2,z_2)$, respectively, by receiving one qudit from each transmitter, the receiver can measure two sums $(x_1+x_2, z_1+z_2)$ with certainty. The negative sign can be handled by transmitter-side local operations. This protocol is suitable for satisfying demands along certain dimensions of $A_{12}+B_{12}$, referred to as \textbf{P4} with amortized cost tuple $(0.5,0.5,0)$, or $A_{13}+C_{13}$ (\textbf{P5} with amortized cost tuple $(0.5,0,0.5)$), or $B_{23}+C_{23}$ (\textbf{P6} with amortized cost tuple $(0,0.5,0.5)$). In addition, it is used to satisfy certain demands along the dimensions of $A_o+B_o$ or $A_o+C_o$. It will be sufficient to use 2-way-sums to satisfy demands for the \emph{same} number of dimensions in $A_o+B_o$ as in $A_o+C_o$. Specifically, by letting (Alice$_1$, Alice$_2$, Alice$_3$) send $(1,1,0)+(1,0,1)=(2,1,1)$ qudits, we satisfy $2$ demand dimensions in each of $A_o+B_o$ and $A_o+C_o$. The amortized cost tuple is $(1,0.5,0.5)$ per dimension of $A_o+B_o$ and $A_o+C_o$. Denote this protocol as \textbf{P7}. Then note that $(A_o+B_o, A_o+C_o)$, $(A_o+B_o, -B_o+C_o)$ and $(A_o+C_o, B_o-C_o)$ are computationally equivalent (invertible) expressions, i.e., any one of them suffices to compute all three of them. Therefore, alternatively, by sending $(1,2,1)$, or $(1,1,2)$ qudits, (Alice$_1$, Alice$_2$, Alice$_3$)  can also satisfy $2$ dimensions in both $A_o+B_o$ and $A_o+C_o$. This gives us another two protocols \textbf{P8} and \textbf{P9}, with respective amortized cost tuples $(0.5,1,0.5)$ and $(0.5,0.5,1)$.
	\item[\textbf{[Superdense coding]:}] Setting $x_2=z_2=0$ in 2-way-sums, by receiving one qudit from each of the two transmitters, the receiver can measure $(x_1, z_1)$ with certainty. Note that this is exactly the superdense coding protocol, and the second transmitter only provides entangled qudits. This protocol is suitable for satisfying demands along certain dimensions of  $A_1$ (referred to as \textbf{P10} if Alice$_2$ provides the entanglement, or \textbf{P11} if Alice$_3$ provides the entanglement). The amortized cost tuple for \textbf{P10} is $(0.5,0.5,0)$ per dimension of $A_1$, and for \textbf{P11} is $(0.5,0,0.5)$. Similarly we define \textbf{P12} with amortized cost tuple $(0.5,0.5,0)$, \textbf{P13} with amortized cost tuple $(0,0.5,0.5)$ as the protocols that use superdense coding to satisfy each dimension of $B_2$, and define \textbf{P14} with amortized cost tuple $(0.5,0,0.5)$, \textbf{P15} with amortized cost tuple $(0,0.5,0.5)$ as the protocols that use superdense coding to satisfy each dimension of $C_3$.
	\item[\textbf{[3-way-sums]:}] For $3$ transmitters with $\mathbb{F}_d$ inputs $(u_1, v_1, w_1, x_1)$, $(u_2,v_2,w_2,x_2)$ and $(u_3,v_3,w_3,$ $x_3)$, respectively, by applying 2-way-sums once to each pair of the three transmitters, with appropriate precoding at the transmitters,  the receiver obtains $[(u_1-v_1)+u_2, (w_1-x_1)+w_2]$, $[v_2+(v_3-u_3), x_2+(x_3-w_3)]$ and $[v_1+u_3, x_1+w_3]$. From these, it is easy to verify that the receiver is able to obtain $[u_1+u_2+u_3, v_1+v_2+v_3, w_1+w_2+w_3, x_1+x_2+x_3]$. In the process, the receiver receives $6$ qudits, $2$ from each transmitter, and the output is $4$ dimensions of $3$-way sums. This protocol is suitable for satisfying demands along certain dimensions of $A_{123}+B_{123}+C_{123}$. Note that the amortized cost tuple is $(0.5,0.5,0.5)$ per dimension of 3-way-sum.  Denote this protocol as \textbf{P16}.
\end{enumerate}
Using Box 2 with transfer matrix $\textbf{M}_2$, we further develop the following protocols.
\begin{enumerate}[align=left]
	\item[\textbf{[3+2+2]:}] For $3$ transmitters with  with inputs $(x_1,z_1)$, $(x_2,z_2)$ and $(x_3,z_3)$ respectively, by receiving one qudit from each transmitter, the receiver can measure three sums $(x_1+x_2+x_3, z_1+z_2, z_1+z_3)$ with certainty. Note that the same $z_1$ appears in both 2-way sums. This protocol is suitable for satisfying demands along certain dimensions of $(A_{123}+B_{123}+C_{123}, A_o+B_o, A_o+C_o)$. The amortized cost tuple is $(1,1,1)$ per dimension in each of $A_{123}+B_{123}+C_{123}, A_o+B_o$ and $A_o+C_o$. Denote this protocol as \textbf{P17}.
	\item[\textbf{[3+1+1]}:] Setting $z_1=0$ in \textbf{P5} allows the receiver to measure $(x_1+x_2+x_3, z_2, z_3)$ with certainty. This protocol is useful for satisfying demands along certain dimensions of $(A_{123}+B_{123}+C_{123}, A_1, B_2)$ (referred to as \textbf{P18}), or $(A_{123}+B_{123}+C_{123}, A_1, C_3)$ (referred to as \textbf{P19}), or $(A_{123}+B_{123}+C_{123}, B_2, C_3)$ (referred to as \textbf{P20}). The amortized cost tuple is $(1,1,1)$ for \textbf{P18}--\textbf{P20}.
\end{enumerate}

\subsection{Achievable region with auxiliary variables}
Define $\mathfrak{D}_{\achi}$ (on the top of the next page) in \eqref{eq:region_achievable} .
\begin{figure*}[t]
{\footnotesize
\begin{align} \label{eq:region_achievable}
	\mathfrak{D}_{\achi} = \left\{
		\begin{bmatrix}
			\Delta_1\\\Delta_2\\\Delta_3	
		\end{bmatrix}
 		\in \mathbb{R}^3 \left|
		\underbrace{\begin{bmatrix}
			\Delta_1\\\Delta_2\\\Delta_3
		\end{bmatrix}
		\geq
		{\bm C} 
		\begin{bmatrix}
			\lambda_1\\ \lambda_2 \\ \lambda_3 \\ \vdots \\ \lambda_{20}
		\end{bmatrix},
		{\bm D}
		\begin{bmatrix}
			\lambda_1\\ \lambda_2 \\ \lambda_3 \\ \vdots \\ \lambda_{20}
		\end{bmatrix}
		\geq
		\begin{bmatrix}
			n_{123}\\ n_{12} \\ n_{13} \\ n_{23} \\ n_o \\ n_1 \\ n_2 \\ n_3
		\end{bmatrix}}_{\mbox{\fontsize{6}{0} \textbf{Cond1}}},
		\underbrace{
		\begin{bmatrix}
			0 \\ 0 \\ 0 \\ \vdots \\ 0
		\end{bmatrix}
		\leq 
		\begin{bmatrix}
			\lambda_1\\ \lambda_2 \\ \lambda_3 \\ \vdots \\ \lambda_{20}
		\end{bmatrix} \in \mathbb{Q}^{20}}_{\mbox{\fontsize{6}{0} \textbf{Cond2}}(\mathbb{Q})}
		\right.
		\right\},
\end{align}
}%
{\footnotesize
\begin{align*}
	{\bm C} =  
	\left[
	\begin{array}{@{}*{20}{c}@{}}
		1 & 0 & 0 & 0.5 & 0.5 & 0 & 1 & 0.5 & 0.5 & 0.5 & 0.5 & 0.5 & 0 & 0.5 & 0 & 0.5 & 1 & 1 & 1 & 1 \\
		0 & 1 & 0 & 0.5 & 0 & 0.5 & 0.5 & 1 & 0.5 & 0.5 & 0 & 0.5 & 0.5 & 0 & 0.5 & 0.5 & 1 & 1 & 1 & 1 \\
		0 & 0 & 1 & 0 & 0.5 & 0.5 & 0.5 & 0.5 & 1 & 0 & 0.5 & 0 & 0.5 & 0.5 & 0.5 & 0.5 & 1 & 1 & 1 & 1
	\end{array}
	\right],
\end{align*}
\begin{align}
	{\bm D} =
	\left[
	\begin{array}{@{}*{20}{c}@{}}
		0 & 0 & 0 & 0 & 0 & 0 & 0 & 0 & 0 & 0 & 0 & 0 & 0 & 0 & 0 & 1 & 1 & 1 & 1 & 1 \\
		0 & 0 & 0 & 1 & 0 & 0 & 0 & 0 & 0 & 0 & 0 & 0 & 0 & 0 & 0 & 0 & 0 & 0 & 0 & 0 \\
		0 & 0 & 0 & 0 & 1 & 0 & 0 & 0 & 0 & 0 & 0 & 0 & 0 & 0 & 0 & 0 & 0 & 0 & 0 & 0 \\
		0 & 0 & 0 & 0 & 0 & 1 & 0 & 0 & 0 & 0 & 0 & 0 & 0 & 0 & 0 & 0 & 0 & 0 & 0 & 0 \\
		0 & 0 & 0 & 0 & 0 & 0 & 1 & 1 & 1 & 0 & 0 & 0 & 0 & 0 & 0 & 0 & 1 & 0 & 0 & 0 \\
		1 & 0 & 0 & 0 & 0 & 0 & 0 & 0 & 0 & 1 & 1 & 0 & 0 & 0 & 0 & 0 & 0 & 1 & 1 & 0 \\
		0 & 1 & 0 & 0 & 0 & 0 & 0 & 0 & 0 & 0 & 0 & 1 & 1 & 0 & 0 & 0 & 0 & 1 & 0 & 1 \\
		0 & 0 & 1 & 0 & 0 & 0 & 0 & 0 & 0 & 0 & 0 & 0 & 0 & 1 & 1 & 0 & 0 & 0 & 1 & 1
	\end{array}
	\right].\label{eq:defD}
\end{align}
\hrulefill\par
}%
\end{figure*}
Let us  first establish that $\mathfrak{D}_{\achi} \subseteq \mathfrak{D}^*$. 
This is argued as follows. $(\lambda_1,\lambda_2, \lambda_3, \cdots, \lambda_{20})$ are the amortized amounts of usage of the corresponding building block protocols (\textbf{P1}--\textbf{P20}). Since the batch size $L$ can be chosen to be any positive integer, $\lambda_i$ are allowed to be any non-negative rationals. Therefore, as long as there exist such non-negative $\lambda_{[20]}$ that satisfy the condition in \eqref{eq:region_achievable}, a feasible coding scheme can be constructed from the combination of the aforementioned building block protocols. Denote $\overline{\mathfrak{D}_{\achi}}$ as the closure of $\mathfrak{D}_{\achi}$ in $\mathbb{R}^3$. It then follows that  $\overline{\mathfrak{D}_{\achi}} \subseteq \mathfrak{D}^*$ as $\mathfrak{D}^*$ is closed by definition. To obtain $\overline{\mathfrak{D}_{\achi}}$, let
{\small
\begin{align}
	\mathfrak{D} \triangleq \left\{ (\Delta_{[3]}, \lambda_{[20]}) \in \mathbb{R}^{23}  \mid \mbox{\textbf{Cond1}}, \mbox{\textbf{Cond2}}(\mathbb{Q}) \right\},
\end{align}
}%
where the conditions $\mbox{\textbf{Cond1}}$ and $\mbox{\textbf{Cond2}}(\mathbb{Q})$ have appeared in \eqref{eq:region_achievable}. It is readily seen that the closure of $\mathfrak{D}$ in $\mathbb{R}^{23}$ is equal to
{\small
\begin{align}
	\overline{\mathfrak{D}} = \left\{ (\Delta_{[3]}, \lambda_{[20]}) \in \mathbb{R}^{23}  \mid \mbox{\textbf{Cond1}}, \mbox{\textbf{Cond2}}(\mathbb{R}) \right\}
\end{align}
}%
where $\mbox{\textbf{Cond2}}(\mathbb{R})$ is the condition $\mbox{\textbf{Cond2}}(\mathbb{Q})$ with $\mathbb{Q}$ replaced by $\mathbb{R}$.

Let $\phi$ be the mapping from $\mathbb{R}^{23}$ to $\mathbb{R}^3$ such that
{\small
\begin{align}
	\phi(\Delta_{[3]}, \lambda_{[20]}) = \Delta_{[3]}.
\end{align}
}%
It follows that $\mathfrak{D}_{\achi} = \phi(\mathfrak{D})$. Since $\phi$ is continuous, $ \phi(\overline{\mathfrak{D}}) \subseteq \overline{\mathfrak{D}_{\achi}}$ \cite[Ex. 9.7]{sutherland2009introduction}. On the other hand, $\mathfrak{D}_{\achi} \subseteq \phi(\overline{\mathfrak{D}})$, and thus $\overline{\mathfrak{D}_{\achi}} \subseteq \overline{\phi(\overline{\mathfrak{D}})} = \phi(\overline{\mathfrak{D}})$, where the last step is because $\phi(\overline{\mathfrak{D}})$ is a 3-dimensional polyhedron and thus closed. We conclude that 
{\small
\begin{align}
	\overline{\mathfrak{D}_{\achi}} &= \phi(\overline{\mathfrak{D}}) = \left\{  \Delta_{[3]}  \in \mathbb{R}^{3}  \mid \mbox{\textbf{Cond1}}, \mbox{\textbf{Cond2}}(\mathbb{R}) \right\}.
\end{align}
}%

\subsection{Eliminating the auxiliaries}
Recall that our goal  is to show that $(\mbox{RHS of } \eqref{eq:region_standard}) \subseteq \mathfrak{D}^*$. Since $\overline{\mathfrak{D}_{\achi}} \subseteq \mathfrak{D}^*$, it suffices to show that $(\mbox{RHS of } \eqref{eq:region_standard}) \subseteq \overline{\mathfrak{D}_{\achi}}$. This is done by Fourier-Motzkin elimination. We also show this explicitly in Appendix \ref{proof:elimination}.

\section{Conclusion}
The information theoretic optimality of the $N$-sum box protocol for all $3$ transmitters $\mathbb{F}_q$ linear computations in the LC-QMAC setting, as established in this work, is both promising and intriguing. In particular, it motivates a natural follow up question -- does this optimality hold for any number of transmitters? Since the $N$-sum box is constrained primarily by the SSO condition, generalized optimality results could shed light on the information theoretic significance of this condition. Based on this work, one would expect that generalizations beyond $3$ transmitters might require both new converse bounds, as well as larger subspace-decompositions. In view of the significant challenges associated with establishing, verifying, and preserving high-fidelity multipartite entanglement, generalizations of LC-QMAC that explicitly address resource costs, scalability, and the impact of noise are essential for connecting the theoretical framework to practical real-world systems. Along this direction, the previously established optimality of $N$-sum box protocols in the $\Sigma$-QMAC \cite{Yao_Jafar_Sum_MAC} under generalized entanglement distribution patterns, and in the $\Sigma$-QEMAC \cite{Yao_Jafar_SQEMAC} (where the quantum channels are subject to erasures) bodes well for future efforts towards these generalizations.

\appendix
\section{Eliminating auxiliaries} \label{proof:elimination}
Our goal is to show that, given non-negative $(\Delta_1, \Delta_2, \Delta_3)$ and $(n_{123}, n_{12},\cdots, n_{3})$ that satisfy
{\small
\begin{align} \label{eq:constraint_given}
	{\bm A} 
	\begin{bmatrix}
		\Delta_1\\\Delta_2\\\Delta_3
	\end{bmatrix}
	\geq
	{\bm B} 
	\begin{bmatrix}
		n_{123}\\
		n_{12}\\
		n_{13}\\
		n_{23}\\
		n_o\\
		n_1\\
		n_2\\
		n_3
	\end{bmatrix},
\end{align}
}%
where ${\bm A}$ and ${\bm B}$ are defined in \eqref{eq:def_AB}, there exist non-negative $(\lambda_1,\cdots, \lambda_{20})$ such that
{\small
\begin{align} \label{eq:constraint_delta_lambda}
	\begin{bmatrix}
			\Delta_1\\\Delta_2\\\Delta_3
		\end{bmatrix}
		\geq
		{\bm C} 
		\begin{bmatrix}
			\lambda_1\\ \lambda_2 \\ \lambda_3 \\ \vdots \\ \lambda_{20}
		\end{bmatrix},
\end{align}
}%
{\small
\begin{align} \label{eq:constraint_lambda_n}
	{\bm D}
		\begin{bmatrix}
			\lambda_1\\ \lambda_2 \\ \lambda_3 \\ \vdots \\ \lambda_{20}
		\end{bmatrix}
		\geq
		\begin{bmatrix}
			n_{123}\\ n_{12} \\ n_{13} \\ n_{23} \\ n_o \\ n_1 \\ n_2 \\ n_3
		\end{bmatrix}.
\end{align}
}%
where ${\bm C}$ and ${\bm D}$ are defined in \eqref{eq:defD}.
Let us use analogy for intuition. First note that all variables considered are non-negative reals. Let $\Delta_1, \Delta_2, \Delta_3$ be the  amounts of three non-exchangeable currencies, namely Currency$_1$, Currency$_2$ and Currency$_3$, say corresponding to $3$ different countries, that are available to an importer of goods, subject to the constraint \eqref{eq:constraint_given}. Let \textbf{P1}--\textbf{P20} represent $20$ different goods,  and $\lambda_1, \lambda_2, \cdots, \lambda_{20}$ be the amounts of these goods to be imported, respectively.  ${\bm C}$ specifies the prices of the $20$ goods sold by the three countries. Specifically, the $(i,j)^{th}$ entry of ${\bm C}$, i.e., $C_{i,j}$ is the cost in terms of  Currency$_i$ to import a unit of \textbf{P}$j$. Condition \eqref{eq:constraint_delta_lambda} says that the total amount of any type of currency spent cannot exceed the amount of that type of currency given to the importer. Further constraints are specified by ${\bm D}$: each row in \eqref{eq:constraint_lambda_n} places a demand on the amounts of goods that need to be imported. There are $8$ rows in ${\bm D}$. Let us refer to the $8$ requirements as \textbf{R1} -- \textbf{R8}. For example, the first row of \eqref{eq:constraint_lambda_n} corresponds to \textbf{R1}, and with ${\bm D}$ as defined in \eqref{eq:defD}, this constraint says that the total amount of \textbf{P16} to \textbf{P20} imported has to be at least $n_{123}$. We will show that as long as the importer is given the amount of currencies $(\Delta_1, \Delta_2, \Delta_3)$ that satisfy \eqref{eq:constraint_given}, then there is always a strategy (described as follows) to satisfy all the constraints.

The strategy is divided into the following main steps.
\begin{enumerate}
	\item Satisfy \textbf{R2}, \textbf{R3}, \textbf{R4} by importing $n_{12}$ units of \textbf{P4}, $n_{13}$ units of \textbf{P5} and $n_{23}$ units of \textbf{P6}. This incurs a cost $0.5(n_{12}+n_{13}, n_{12}+n_{23}, n_{13}+n_{23})$ in terms of (Currency$_1$, Currency$_2$, Currency$_3$), and the feasibility (availability of sufficient currency) is guaranteed by \eqref{eq:constraint_given}.
	\item Import $\min\{n_{123}, n_o\} \triangleq \tilde{n}$ unit of \textbf{P17}, which incurs a cost $(\tilde{n}, \tilde{n}, \tilde{n})$. The feasibility is guaranteed by \eqref{eq:constraint_given}. Note that after this step, either \textbf{R1} or \textbf{R5} is satisfied: if $\tilde{n} = n_{123}$, then \textbf{R1} is satisfied; if $\tilde{n} = n_{o}$, then \textbf{R5} is satisfied.
	\item Case I: If $\tilde{n} = n_{123}$, then import appropriate amount of \textbf{P7}, \textbf{P8}, \textbf{P9} to satisfy \textbf{R5}, and import appropriate amount of \textbf{P1}--\textbf{P3}, \textbf{P10}--\textbf{P15} to satisfy \textbf{R6}--\textbf{R8}. Case II: if $\tilde{n} = n_{o}< n_{123}$, then import appropriate amount of \textbf{P1}--\textbf{P3}, \textbf{P10}--\textbf{P16}, \textbf{P18}--\textbf{P20} to satisfy \textbf{R1} and \textbf{R6}--\textbf{R8}.
\end{enumerate}
While in the first two steps we specify exact amount of imported goods and the currencies spent, the third step is more complicated and needs further analysis, since it involves further optimizations that are not so straightforward. In the following we analyze the third step.

Recall that the initial currency amounts given to the importer are $(\Delta_1, \Delta_2, \Delta_3)$. Thus, after the first two steps, there remains
{\small
\begin{align}
	&(\Delta_1, \Delta_2, \Delta_3) -\frac{(n_{12}+n_{13}, n_{12}+n_{23}, n_{13}+n_{23})}{2}  -(\tilde{n}, \tilde{n}, \tilde{n}) \notag \\
	& \triangleq (\Delta_1', \Delta_2', \Delta_3')
\end{align}
}%
currency for the importer to allocate.

\noindent \textbf{Case I:} In this case $\tilde{n} = n_{123}$. Define $n_o' \triangleq n_o-n_{123} \geq 0$. After the first two steps, the importer still needs to fulfill \textbf{R5}--\textbf{R8}. For \textbf{R5}, since $\tilde{n} = n_{123}$ out of $n_o$ is satisfied by importing \textbf{P17}, there remains another $n_o'$ to be fulfilled. The first four rows of  \eqref{eq:constraint_given}  imply,
{\small
\begin{align}
\begin{cases}
	\Delta_1' \geq \frac{1}{2}(n_o'+n_1) \\
	\Delta_2' \geq \frac{1}{2}(n_o'+n_2) \\
	\Delta_3' \geq \frac{1}{2}(n_o'+n_3) \\
	\Delta_1' + \Delta_2' + \Delta_3' \geq 2n_o' + n_1 + n_2 + n_3
\end{cases}
\end{align}
}%
Therefore, there exist non-negative $(a_i, b_i)_{i\in[3]}$ such that
{\small
\begin{align}
	\Delta_i' = \frac{1}{2}(n_o'+n_i) + a_i+b_i, ~~ \forall i\in [3],
\end{align}
}%
and
{\small
\begin{align}
	a_1+a_2+a_3 = \frac{n_o'}{2},~~ b_1+b_2+b_3 = \frac{n_1+n_2+n_3}{2}.
\end{align}
}%
The strategy then finds
{\small
\begin{align}
	&\begin{bmatrix}
		1 & 0.5 & 0.5\\
		0.5 & 1 & 0.5\\
		0.5 & 0.5 & 1 
	\end{bmatrix}
	\begin{bmatrix}
		\lambda_7\\ \lambda_8 \\ \lambda_9
	\end{bmatrix}
	=
	\begin{bmatrix}
		\frac{n_o'}{2}+a_1\\
		\frac{n_o'}{2}+a_2\\
		\frac{n_o'}{2}+a_3\\
	\end{bmatrix}
	\\
	& 
	\implies
	\begin{bmatrix}
		\lambda_7 \\ \lambda_8 \\ \lambda_9
	\end{bmatrix}
	=
	\begin{bmatrix}
		\frac{n_o'}{4}+\frac{3a_1-a_2-a_3}{2}\\
		\frac{n_o'}{4}+\frac{3a_2-a_1-a_3}{2}\\
		\frac{n_o'}{4}+\frac{3a_3-a_1-a_2}{2}\\
	\end{bmatrix} \notag \\
	&~~~~~~~~~~~~~~~~~~~~\geq
	\begin{bmatrix}
		\frac{n_o'}{4}-\frac{a_1+a_2+a_3}{2}\\
		\frac{n_o'}{4}-\frac{a_1+a_2+a_3}{2}\\
		\frac{n_o'}{4}-\frac{a_1+a_2+a_3}{2}\\
	\end{bmatrix}
	=
	\begin{bmatrix}
		0\\0\\0
	\end{bmatrix}
\end{align}
}%
With this choice of $(\lambda_7, \lambda_8, \lambda_9)$, the remaining $n_o'$ part in \textbf{R5} is satisfied, as $\lambda_7+\lambda_8+\lambda_9 = n_o'$. Now, only \textbf{R6} -- \textbf{R8} remain to be fulfilled, and the remaining currency amounts are,
{\small
\begin{align}
 \left(\frac{n_1}{2} + b_1, \frac{n_2}{2} + b_2, \frac{n_3}{2} + b_3\right) \triangleq (\Delta_1'', \Delta_2'', \Delta_3''). 
\end{align}
}%
The importer will then import \textbf{P1}--\textbf{P3}, \textbf{P10}--\textbf{P15} to satisfy \textbf{R6}--\textbf{R8}. We claim that this is feasible as long as the following  conditions hold,
{\small
\begin{align} \label{eq:cond_indep_messages}
\begin{cases}
	\Delta_i'' \geq \frac{n_i}{2}, \forall i \in [3],\\
	\Delta_1''+\Delta_2''+\Delta_3'' \geq n_1+n_2+n_3
\end{cases}
\end{align}
}%
Intuitively, this claim (formalized in Lemma \ref{lem:IM}) follows from the fact that \textbf{P1}--\textbf{P3} are from TQC and \textbf{P10}--\textbf{P15} are from superdense coding. This condition \eqref{eq:cond_indep_messages} is satisfied because $b_1,b_2,b_3$ are non-negative and because $b_1+b_2+b_3= \frac{n_1+n_2+n_3}{2}$. Therefore, \textbf{R5}--\textbf{R8} are satisfied. This completes the proof for Case I.

The claim is formalized in the following lemma, which will be useful again in the sequel.
\begin{lemma} \label{lem:IM}
	Say the remaining demands to be satisfied are $n_1,n_2,n_3$ corresponding to \textbf{R6}, \textbf{R7}, \textbf{R8}, respectively. If the remaining currencies $(\Delta_1, \Delta_2, \Delta_3)$ satisfy $\Delta_i \geq \frac{n_i}{2}, \forall i\in [3]$ and $\Delta_1+\Delta_2+\Delta_3 \geq n_1+n_2+n_3$, then there exist non-negative $(\lambda_i)_{i\in \{1,2,3,10,\cdots,15\}}$ such that
	{\small
	\begin{align}
	\begin{cases}
		\Delta_1 \geq \lambda_1 + 0.5(\lambda_{10} + \lambda_{11} + \lambda_{12} + \lambda_{14}) \\
		\Delta_2 \geq \lambda_2 + 0.5(\lambda_{10} + \lambda_{12} + \lambda_{13} + \lambda_{15}) \\
		\Delta_3 \geq \lambda_3 + 0.5(\lambda_{11} + \lambda_{13} + \lambda_{14} + \lambda_{15}) \\
		\lambda_1 + \lambda_{10}+\lambda_{11} \geq n_{1}\\
		\lambda_2 + \lambda_{12}+ \lambda_{13} \geq n_{2}\\
		\lambda_3 + \lambda_{14}+ \lambda_{15} \geq n_{3}
	\end{cases}\label{eq:lemIM}
	\end{align}
	}%
\end{lemma}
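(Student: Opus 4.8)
The plan is to read \eqref{eq:lemIM} as a small linear-program (equivalently transportation) feasibility question and to settle it by splitting each protocol's cost into a fixed ``mandatory'' part and a completely flexible ``transportation'' part. First I would isolate the two structural facts that drive everything. Fact (a): each of the nine protocols involved here (\textbf{P1}--\textbf{P3} and \textbf{P10}--\textbf{P15}) has a cost tuple whose three entries sum to exactly $1$; hence any way of covering the demands $n_1,n_2,n_3$ consumes exactly $n_1+n_2+n_3$ units of currency in total, which the hypothesis $\Delta_1+\Delta_2+\Delta_3\geq n_1+n_2+n_3$ makes just affordable. Fact (b): every protocol that serves requirement \textbf{R}$_{5+i}$ spends at least $0.5$ in Currency$_i$ (its ``own'' currency); for $i=1$ this is \textbf{P1}$=(1,0,0)$, \textbf{P10}$=(0.5,0.5,0)$, \textbf{P11}$=(0.5,0,0.5)$, and symmetrically for $i=2,3$. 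Fact (b) is precisely why the hypotheses $\Delta_i\geq n_i/2$ are needed, and it dictates the correct decomposition.

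Next I would construct a solution that meets each demand with equality, writing every protocol's cost as ``$0.5$ mandatory in its own currency $+$ $0.5$ flexible.'' Covering the $n_i$ units of \textbf{R}$_{5+i}$ then contributes a fixed charge of $0.5 n_i$ to Currency$_i$, while the remaining $0.5 n_i$ is an ``extra'' that, via the choice among the pure-TQC protocol and the two superdense variants, may be routed into \emph{any} of the three currencies. After paying the mandatory charges the residual budgets $\widetilde{\Delta}_i\triangleq \Delta_i-0.5 n_i$ are nonnegative by Fact (b), and their total is $\sum_i\widetilde{\Delta}_i=\bigl(\sum_i\Delta_i\bigr)-0.5\sum_i n_i\geq 0.5\sum_i n_i$ by Fact (a).

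What remains is a pure transportation feasibility problem: place nonnegative supplies $s_i=0.5 n_i$ (the extras produced by each requirement) into bins of capacity $\widetilde{\Delta}_i$, where every unit of supply is admissible in every bin. Such a placement exists iff total supply $\sum_i s_i$ does not exceed total capacity $\sum_i\widetilde{\Delta}_i$, which we have just checked; I would exhibit one explicitly by water-filling (greedily fill Currency$_1$, then Currency$_2$, then Currency$_3$). Finally I would translate the routing of extras back to protocol usages: if $x^{(i)}_j$ denotes the extra of \textbf{R}$_{5+i}$ routed to Currency$_j$, then the $\lambda$'s are recovered as $\lambda_1=2x^{(1)}_1,\ \lambda_{10}=2x^{(1)}_2,\ \lambda_{11}=2x^{(1)}_3$ for \textbf{R6}, and analogously for \textbf{R7},\textbf{R8}. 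A direct substitution shows the Currency$_i$ usage equals $0.5 n_i$ plus the total extra routed to Currency$_i$, hence is at most $\Delta_i$, and the demand rows hold with equality by construction, giving all six inequalities of \eqref{eq:lemIM}. The only real effort is bookkeeping the nine coefficients correctly; there is no genuine obstacle, as the claim reduces to the elementary fact that a transportation problem with unconstrained routes is feasible whenever total supply does not exceed total capacity.
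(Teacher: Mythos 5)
Your proposal is correct, and it verifies cleanly: for each requirement \textbf{R}$_{5+i}$ the three available protocols (e.g., \textbf{P1}, \textbf{P10}, \textbf{P11} for \textbf{R6}) each charge exactly $0.5$ to Currency$_i$ and place their remaining $0.5$ in Currency$_1$, Currency$_2$, or Currency$_3$ respectively, so the ``mandatory plus freely routable extra'' decomposition is exact, and substituting $\lambda_1=2x^{(1)}_1$, $\lambda_{10}=2x^{(1)}_2$, $\lambda_{11}=2x^{(1)}_3$ (and the analogues for \textbf{R7}, \textbf{R8}) does make the Currency$_i$ usage equal to $0.5n_i$ plus the total extra routed to bin $i$, which your supply/capacity count bounds by $\Delta_i$. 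The paper proves the lemma by a different, shorter route: it observes that the feasible set of $(\Delta_1,\Delta_2,\Delta_3)$ is convex and upward-closed, so it suffices to check the three extreme points of the hypothesis polyhedron, namely $\left(\tfrac{n_1}{2},\tfrac{n_2}{2},\tfrac{n_1+n_2}{2}+n_3\right)$ and its two symmetric counterparts; at each vertex a single explicit assignment (e.g., $\lambda_3=n_3$, $\lambda_{11}=n_1$, $\lambda_{13}=n_2$, all else zero) works, and symmetry handles the other two. The paper's argument is more economical but leans on the reduction-to-vertices step (every point of the polyhedron dominates a convex combination of its three vertices); your argument avoids that step entirely and produces an explicit solution for arbitrary $(\Delta_1,\Delta_2,\Delta_3)$ satisfying the hypotheses, at the price of the bookkeeping you acknowledge. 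Both are complete proofs; yours also makes transparent \emph{why} the two hypotheses are exactly the right ones (your Facts (a) and (b)), which the paper's vertex check does not explain.
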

Note that the first three conditions in \eqref{eq:lemIM} imply that the available currency is sufficient for the amounts corresponding to $(\lambda_i)_{i\in \{1,2,3,10,\cdots,15\}}$, with the remaining $\lambda_i$ set to zero. The last three conditions in \eqref{eq:lemIM} imply that \textbf{R6}, \textbf{R7}, \textbf{R8} are satisfied.
\begin{proof}
	Let us first show the existence of $(\lambda_i)_{i\in \{1,2,3,10,\cdots,15\}}$ given that $(\Delta_1, \Delta_2, \Delta_3)$ is in the following set of three extremal points, 
	{\small
	\begin{align*}
		&\mathcal{S} \triangleq \Bigg\{ \left(\frac{n_1}{2},\frac{n_2}{2},\frac{n_1+n_2}{2}+n_3\right), \left(\frac{n_1}{2},\frac{n_1+n_3}{2}+n_2,\frac{n_3}{2}\right),  \left(\frac{n_2+n_3}{2}+n_1,\frac{n_2}{2}, \frac{n_3}{2}\right) \Bigg\}.
	\end{align*}
	}%
	Due to symmetry it suffices to consider the first case, i.e., $(\Delta_1, \Delta_2, \Delta_3) = (\frac{n_1}{2},\frac{n_2}{2},\frac{n_1+n_2}{2}+n_3)$. The solution of $(\lambda_i)_{i\in \{1,2,3,10,\cdots,15\}}$ for this point is listed explicitly as,
	{\small
	\begin{align}
		&\lambda_3 = n_3, \lambda_{11} = n_1, \lambda_{13} = n_2, \notag \\ &\mbox{and} ~\lambda_i = 0, \forall i \in\{1,2,10,12,14,15\}.
	\end{align}
	}%
	By symmetry this shows that the three points in $\mathcal{S}$ are contained in the region specified by \eqref{eq:lemIM}. Let $\mathcal{S}_c$ be the set of all convex combinations of the three points in $\mathcal{S}$. Since the region specified by \eqref{eq:lemIM} is convex, it also contains $\mathcal{S}_c$. This proves the existence of $(\lambda_i)_{i\in \{1,2,3,10,\cdots,15\}}$ for any $(\Delta_1,\Delta_2,\Delta_3) \in  \mathcal{S}_c$. Finally, let us note that all cases of $(\Delta_1, \Delta_2, \Delta_3)$ are either in $\mathcal{S}_c$ or allow more available currency amounts which cannot hurt the existence of $(\lambda_i)_{i\in \{1,2,3,10,\cdots,15\}}$.  
	The proof of the lemma is thus complete.
\end{proof}
\noindent \textbf{Case II:} In this case $\tilde{n} = n_{o}$ and $n_{123}>n_{o}$. Define $n_{123}' \triangleq n_{123}-n_o >0$. After the first two steps, the importer still needs to fulfill \textbf{R1}, \textbf{R6}--\textbf{R8}. For \textbf{R1}, since $\tilde{n} = n_{o}$ out of the $n_{123}$ constraint is already satisfied by importing \textbf{P17}, there only remains another $n_{123}'$ to be fulfilled. To this end, we will show that it suffices  to import certain amounts of \textbf{P1}--\textbf{P3}, \textbf{P10}--\textbf{P16}, \textbf{P18}--\textbf{P20}. Starting with \eqref{eq:constraint_given}, we note that the remaining currency amounts $(\Delta_1', \Delta_2', \Delta_3')$ after the first two steps satisfy
{\small
\begin{align} \label{eq:constraint_given_reduced}
	\begin{cases}
		\Delta_1' \geq \frac{1}{2}(n_{123}' + n_1)\\
		\Delta_2' \geq \frac{1}{2}(n_{123}' + n_2)\\
		\Delta_3' \geq \frac{1}{2}(n_{123}' + n_3)\\
		\Delta_1'+\Delta_2'+\Delta_3' \geq n_{123}' + \frac{n_1+n_2+n_3}{2} + \Gamma \\
		\Delta_1'+\frac{\Delta_2'}{2} + \frac{\Delta_3'}{2} \geq n_{123}'+\frac{n_1+n_2+n_3}{2}   \\
		\Delta_2'+\frac{\Delta_1'}{2} + \frac{\Delta_3'}{2} \geq n_{123}'+\frac{n_1+n_2+n_3}{2}  \\
		\Delta_3'+\frac{\Delta_1'}{2} + \frac{\Delta_2'}{2} \geq n_{123}'+\frac{n_1+n_2+n_3}{2} 
	\end{cases}
\end{align}
}%
where we define,
{\small
\begin{align}
	\Gamma \triangleq \max \Big\{\frac{n_{123}'+n_1}{2},  \frac{n_{123}'+n_2}{2},\frac{n_{123}'+n_3}{2}, \frac{n_1+n_2+n_3}{2} \Big\}.
\end{align}
}%
{\small
\begin{align}
	\begin{cases}
		\Delta_1' \geq \lambda_1 + \frac{\lambda_{10}+\lambda_{11}+\lambda_{12}+\lambda_{14}+\lambda_{16}}{2}+\lambda_{18}+\lambda_{19}+\lambda_{20}\\
		\Delta_2' \geq \lambda_2 + \frac{\lambda_{10}+\lambda_{12}+\lambda_{13}+\lambda_{15}+\lambda_{16}}{2}+\lambda_{18}+\lambda_{19}+\lambda_{20} \\
		\Delta_3' \geq \lambda_3 + \frac{\lambda_{11}+\lambda_{13}+\lambda_{14}+\lambda_{15}+\lambda_{16}}{2}+\lambda_{18}+\lambda_{19}+\lambda_{20} \\
		\lambda_{16}+\lambda_{18} + \lambda_{19}+\lambda_{20} \geq n_{123}' ~~ \mbox{\textbf{(R1)}} \\
		\lambda_1 + \lambda_{10} + \lambda_{11} + \lambda_{18} + \lambda_{19} \geq n_1 ~~ \mbox{\textbf{(R6)}} \\
		\lambda_2 + \lambda_{12} + \lambda_{13} + \lambda_{18} + \lambda_{20} \geq n_2 ~~ \mbox{\textbf{(R7)}} \\
		\lambda_3 + \lambda_{14} + \lambda_{15} + \lambda_{19} + \lambda_{20} \geq n_3 ~~ \mbox{\textbf{(R8)}}
	\end{cases}
\end{align}
}%
It suffices to show the existence of $(\lambda_i)_{i\in\{1,2,3,10,\cdots,16, 18,19,20\}}$ for the corner points of $(\Delta_1', \Delta_2', \Delta_3')$ in the region induced by \eqref{eq:constraint_given_reduced}. Further by symmetry, it suffices to consider the following $7$ subcases (II.1 -- II.7).

\noindent \textbf{II.1:} In this case we consider
{\small
\begin{align}
	\begin{bmatrix}
		\Delta_1' \\ \Delta_2' \\ \Delta_3'
	\end{bmatrix}
	=
	\begin{bmatrix}
		\frac{n_{123}'+n_1}{2} \\
		\frac{n_{123}'+n_2}{2} \\
		\frac{n_{123}'+n_3}{2}
	\end{bmatrix}
\end{align}
}%
which corresponds to the first three inequalities in \eqref{eq:constraint_given_reduced} being tight.
It can be verified that \eqref{eq:constraint_given_reduced} then implies
{\small
\begin{align}
	n_1=n_2=n_3=0,
\end{align}
}%
by noting the non-negativity of $n_i, \forall i\in [3]$. This means that \textbf{R5}--\textbf{R8} do not require anything.
Therefore, for this subcase, the importer only needs to import $n_{123}'$ amount of \textbf{P16} to satisfy \textbf{R1}. The feasibility is guaranteed since $\Delta_i' \geq \frac{n_{123}'}{2}$ for $i\in [3]$.

\noindent \textbf{II.2:} 
In this case we consider
{\small
\begin{align}
	\begin{bmatrix}
		\Delta_1' \\ \Delta_2' \\ \Delta_3'
	\end{bmatrix}
	=
	\begin{bmatrix}
		\frac{n_{123}'+n_1}{2} \\
		\frac{n_{123}'+n_2}{2} \\
		\frac{n_3}{2} + \Gamma
	\end{bmatrix}
\end{align}
}%
which corresponds to the $1^{st}, 2^{nd}, 4^{th}$ inequalities in \eqref{eq:constraint_given_reduced} being tight. It can be verified that \eqref{eq:constraint_given_reduced} then implies
{\small
\begin{align} \label{eq:II2}
\begin{cases}
	n_1 \geq \min\{n_2+n_3, n_{123}'\} \\
	n_2 \geq \min\{n_1+n_3, n_{123}'\}
\end{cases},
\end{align}
}%
and we consider the following subcases.

\noindent \textbf{II.2.a:} $n_{123}' \geq \max\{n_1+n_3, n_2+n_3\}$. \eqref{eq:II2} implies $n_1=n_2, n_3= 0$. Import $n_1$ amount of \textbf{P18}, and $n_{123}'-n_1$ amount of \textbf{P16}. This is feasible as $\Delta_i'\geq \frac{n_{123}'+n_1}{2}, \forall i\in [3]$.

\noindent \textbf{II.2.b:} $n_1+n_3 \geq n_{123}' \geq  n_2+n_3$. \eqref{eq:II2} implies $n_1\geq n_2+n_3, n_2 \geq n_{123}'$. This further implies $n_3 = 0$ (\textbf{R8} satisfied) and $n_{123}' = n_2$. Import $n_2$ \textbf{P18}, and then \textbf{R1}, \textbf{R7} are satisfied. It remains to satisfy \textbf{R6}. The remaining currencies are
{\small
\begin{align}
	\begin{bmatrix}
		\Delta_1'' \\ \Delta_2'' \\ \Delta_3''
	\end{bmatrix}
	=
	\begin{bmatrix}
		\frac{n_1-n_{123}'}{2}  \\ 0 \\ \Gamma - n_{123}'
	\end{bmatrix}
\end{align}
}%
and the remaining demands for \textbf{R6} is $n_1 - n_{123}'$. Lemma \ref{lem:IM} then implies that the remaining demand  of \textbf{R6} can be satisfied with the remaining currencies.

\noindent \textbf{II.2.c:} $n_1+n_3 \geq  n_2+n_3 \geq n_{123}'$. \eqref{eq:II2} implies $n_1\geq n_{123}', n_2 \geq n_{123}'$. Import $n_{123}'$ \textbf{P18} and \textbf{R1} is satisfied. The remaining currency amounts are
{\small
\begin{align}
	\begin{bmatrix}
		\Delta_1'' \\ \Delta_2'' \\ \Delta_3''
	\end{bmatrix}
	=
	\begin{bmatrix}
		\frac{n_1-n_{123}'}{2} \\ \frac{n_2-n_{123}'}{2} \\ \frac{n_3}{2} + \Gamma - n_{123}'
	\end{bmatrix}
\end{align}
}%
and the remaining demands for \textbf{R6}--\textbf{R8} are
{\small
\begin{align}
\begin{bmatrix}
	n_1' \\ n_2' \\ n_3'
\end{bmatrix}
=
\begin{bmatrix}
	n_1 - n_{123}'\\
	n_2 - n_{123}'\\
	n_3
\end{bmatrix}.
\end{align}
}%
Lemma \ref{lem:IM} then implies that the remaining demands of \textbf{R6}--\textbf{R8} can be satisfied with the remaining currencies.

\noindent \textbf{II.2.d:} $n_2+n_3 \geq n_{123}' \geq  n_1+n_3$. By symmetry this case can be reduced to \textbf{II.2.b}.

\noindent \textbf{II.2.e:} $n_2+n_3 \geq  n_1+n_3 \geq n_{123}'$. By symmetry this case can be reduced to \textbf{II.2.c}.

\noindent \textbf{II.3:} In this case we consider
{\small
\begin{align}
\begin{bmatrix}
		\Delta_1' \\ \Delta_2' \\ \Delta_3'
	\end{bmatrix}
	=
	\begin{bmatrix}
		\frac{n_{123}'+n_1}{2} \\
		\frac{n_{123}'+n_2}{2} \\
		\frac{n_{123}'}{2}+\frac{n_2}{2}+n_3
	\end{bmatrix}
\end{align}
}%
which corresponds to the $1^{st}, 2^{nd}, 5^{th}$ inequalities in \eqref{eq:constraint_given_reduced} being tight. It can be verified that \eqref{eq:constraint_given_reduced} then implies
{\small
\begin{align}
	n_1\leq \min \{n_{123}', n_2\}.
\end{align}
}%
Import $n_1$ amount of \textbf{P18} and $n_{123}'-n_1$ amount of \textbf{P16}. \textbf{R1} is then satisfied. The remaining currency amounts are
{\small
\begin{align}
	\begin{bmatrix}
		\Delta_1'' \\ \Delta_2'' \\ \Delta_3''
	\end{bmatrix}
	=
	\begin{bmatrix}
		0 \\ \frac{n_2-n_1}{2} \\ \frac{n_2-n_1}{2}+n_3
	\end{bmatrix}
\end{align}
}%
and the remaining demands for \textbf{R6}--\textbf{R8} are
{\small
\begin{align}
\begin{bmatrix}
	n_1' \\ n_2' \\ n_3'
\end{bmatrix}
=
\begin{bmatrix}
	0 \\
	n_2 - n_1\\
	n_3
\end{bmatrix}.
\end{align}
}%
Lemma \ref{lem:IM} then implies that the remaining demands of \textbf{R6}--\textbf{R8} can be satisfied with the remaining currencies. 

\noindent \textbf{II.4:} In this case we consider
{\small
\begin{align}
\begin{bmatrix}
		\Delta_1' \\ \Delta_2' \\ \Delta_3'
	\end{bmatrix}
	=
	\begin{bmatrix}
		\frac{n_{123}'+n_1}{2} \\
		\frac{n_{123}'+n_2}{2} \\
		\frac{n_{123}'}{2}+\frac{n_1+n_2}{4}+\frac{n_3}{2}
	\end{bmatrix}
\end{align}
}%
which corresponds to the $1^{st}, 2^{nd}, 7^{th}$ inequalities in \eqref{eq:constraint_given_reduced} being tight. It can be verified that \eqref{eq:constraint_given_reduced} then implies
{\small
\begin{align}
	n_{123}' \geq n_1=n_2, ~ n_3= 0.
\end{align}
}%
\textbf{R8} requires nothing. Import $n_1$ amount of \textbf{P18}, and $n_{123}'-n_1$ amount of \textbf{P16}. The feasibility can be verified and this satisfies \textbf{R1}, \textbf{R6} and \textbf{R7}. 

\noindent \textbf{II.5:} 
In this case we consider
{\small
\begin{align}
\begin{bmatrix}
		\Delta_1' \\ \Delta_2' \\ \Delta_3'
	\end{bmatrix}
	=
	\begin{bmatrix}
		\frac{n_{123}'+n_1}{2}\\
		\frac{n_1+n_2+n_3}{2}+n_{123}'-\Gamma\\
		2\Gamma - \frac{n_1}{2} - \frac{n_{123}'}{2}
	\end{bmatrix}
\end{align}
}%
which corresponds to the $1^{st}, 4^{th}, 6^{th}$ inequalities in \eqref{eq:constraint_given_reduced} being tight. Note that $\Gamma$ is a maximum of $4$ terms, and we further consider subcases according to the value of $\Gamma$ as follows.

\noindent \textbf{II.5.a:} $\Gamma = \frac{n_{123}'+n_1}{2}$. This condition together with \eqref{eq:constraint_given_reduced} implies
{\small
\begin{align}
	\min\{n_1,n_{123}'\} \geq n_2+n_3.
\end{align}
}%
Import $n_2$ amount of \textbf{P18}, $n_3$ amount of \textbf{P19}, and $n_{123}'-(n_2+n_3)$ amount of \textbf{P16}. \textbf{R1}, \textbf{R7} and \textbf{R8} are then satisfied. 
The remaining currencies are
{\small
\begin{align}
	\begin{bmatrix}
		\Delta_1'' \\ \Delta_2'' \\ \Delta_3''
	\end{bmatrix}
	=
	\begin{bmatrix}
		\frac{n_1-n_2-n_3}{2} \\
		0 \\
		\frac{n_1-n_2-n_3}{2}
	\end{bmatrix}
\end{align}
}%
and the remaining demand for \textbf{R6} is
{\small
\begin{align}
	n_3' = n_1 - n_2 - n_3.
\end{align}
}%
Lemma \ref{lem:IM} then implies that the remaining demand of \textbf{R8} can be satisfied with the remaining currency amounts.

\noindent \textbf{II.5.b:} $\Gamma = \frac{n_{123}'+n_2}{2}$. This condition together with \eqref{eq:constraint_given_reduced} implies
{\small
\begin{align}
	n_3=0,~~ n_{123}'\geq n_1=n_2.
\end{align}
}%
\textbf{R8} requires nothing. Import $n_2$ amount of \textbf{P18} and $n_{123}'-n_2$ amount of \textbf{P16}. The feasibility can be verified, and this satisfies \textbf{R1}, \textbf{R6} and \textbf{R7}.

\noindent \textbf{II.5.c:} $\Gamma = \frac{n_{123}'+n_3}{2}$. This condition together with \eqref{eq:constraint_given_reduced} implies
{\small
\begin{align}
	n_2 = 0,~~ \min\{n_{123}', n_3\} \geq n_1.
\end{align}
}%
\textbf{R7} requires nothing. Import $n_1$ amount of \textbf{P19}, $n_{123}'-n_1$ amount of \textbf{P16}, and $n_3-n_1$ amount of \textbf{P3}. The feasibility can be verified and  this satisfies \textbf{R1},  \textbf{R6} and \textbf{R8}.

\noindent \textbf{II.5.d:} $\Gamma = \frac{n_1+n_2+n_3}{2}$. This condition together with \eqref{eq:constraint_given_reduced} implies
{\small
\begin{align}
	n_1\geq n_{123}'\geq n_2, n_2+n_3\geq n_{123}'.
\end{align}
}%
Import $n_2$ amount of \textbf{P18}, and $n_{123}'- n_2$ amount of \textbf{P19}. \textbf{R1} and \textbf{R7} are then satisfied. The remaining currencies are
{\small
\begin{align}
	\begin{bmatrix}
		\Delta_1'' \\ \Delta_2'' \\ \Delta_3''
	\end{bmatrix}
	=
	\begin{bmatrix}
		\frac{n_1-n_{123}'}{2} \\
		0\\
		\frac{n_1-3n_{123}'}{2}+n_2+n_3 
	\end{bmatrix}
\end{align}
}%
and the remaining demands for \textbf{R6} and \textbf{R8} are
{\small
\begin{align}
	n_1' = n_1 - n_{123}', ~ n_3' = n_3- n_{123}'+n_2.
\end{align}
}%
Lemma \ref{lem:IM} then implies that the remaining demands of \textbf{R6} and \textbf{R8} can be satisfied with the remaining currencies.

\noindent \textbf{II.6:} In this case we consider
{\small
\begin{align}
	\begin{bmatrix}
		\Delta_1' \\ \Delta_2' \\ \Delta_3'
	\end{bmatrix}
	=
	\begin{bmatrix}
		\frac{n_1+n_2+n_3}{2}+n_{123}'-\Gamma \\
	\frac{n_1+n_2+n_3}{2}+n_{123}'-\Gamma \\
	3\Gamma - \frac{n_1+n_2+n_3}{2} -  n_{123}' 
	\end{bmatrix}
\end{align}
}%
which corresponds to the $4^{th}, 5^{th}, 6^{th}$ inequalities in \eqref{eq:constraint_given_reduced} being tight. We further consider subcases according to the value of $\Gamma$ as follows.

\noindent \textbf{II.6.a:} $\Gamma = \frac{n_{123}'+n_1}{2}$. This condition together with \eqref{eq:constraint_given_reduced} implies
{\small
\begin{align}
	n_{123}' \geq n_1= n_2+n_3.
\end{align}
}%
Import $n_2$ amount of \textbf{P18}, $n_3$ amount of \textbf{P19}, and $n_{123}'-n_1$ amount of \textbf{P16}. The feasibility can be verified, and this satisfies \textbf{R1} and \textbf{R6} -- \textbf{R8}.

\noindent \textbf{II.6.b:} $\Gamma = \frac{n_{123}'+n_2}{2}$. By symmetry, this case is the same as II.6.a.

\noindent \textbf{II.6.c:} $\Gamma = \frac{n_{123}'+n_3}{2}$. This condition together with \eqref{eq:constraint_given_reduced} implies
{\small
\begin{align}
	\min\{n_{123}', n_3 \} \geq n_1+n_2.
\end{align}
}%
Import $n_1$ amount of \textbf{P19}, $n_2$ amount of \textbf{P20}, $n_{123}'-(n_1+n_2)$ amount of \textbf{P16} and $n_3-n_1-n_2$ amount of \textbf{P3}. The feasibility can be verified and this satisfies \textbf{R1} and \textbf{R6} -- \textbf{R8}.

\noindent \textbf{II.6.d:} $\Gamma = \frac{n_1+n_2+n_3}{2}$. This condition together with \eqref{eq:constraint_given_reduced} implies
{\small
\begin{align}
	&\min\Big\{n_1+n_2, n_1+n_3, n_2+n_3, \frac{n_1+n_2+n_3}{2}\Big\} \notag \\
	&\geq n_{123}' \geq \max\{n_1,n_2\}.
\end{align}
}%
Import $n_1+n_2-n_{123}'$ amount of \textbf{P18}, $n_{123}'-n_2$ amount of \textbf{P19}, $n_{123}'-n_1$ amount of \textbf{P20}, and $n_1+n_2+n_3-2n_{123}$ amount of \textbf{P3}.   The feasibility can be verified and this satisfies \textbf{R1} and \textbf{R6}--\textbf{R8}.

\noindent \textbf{II.7:} In this case we consider
{\small
\begin{align}
	\begin{bmatrix}
		\Delta_1' \\ \Delta_2' \\ \Delta_3'
	\end{bmatrix}
	=
	\begin{bmatrix}
		\frac{n_{123}'}{2}+\frac{n_1+n_2+n_3}{4}\\
		\frac{n_{123}'}{2}+\frac{n_1+n_2+n_3}{4}\\
		\frac{n_{123}'}{2}+\frac{n_1+n_2+n_3}{4}
	\end{bmatrix}
\end{align}
}%
which corresponds to the $5^{th}, 6^{th}, 7^{th}$ inequalities in \eqref{eq:constraint_given_reduced} being tight. It can be verified that \eqref{eq:constraint_given_reduced} then implies
{\small
\begin{align}
	&n_i+n_j \geq n_k, \mbox{ for distinct } i,j,k\in[3],\notag \\
	&\mbox{and} ~~ n_{123}'\geq \frac{n_1+n_2+n_3}{2}.
\end{align}
}%
Import $\frac{n_1+n_2-n_3}{2}$ amount of \textbf{P18}, $\frac{n_1+n_3-n_2}{2}$ amount of \textbf{P19}, $\frac{n_2+n_3-n_1}{2}$ amount of \textbf{P20}, and $n_{123}'-\frac{n_1+n_2+n_3}{2}$ amount of \textbf{P16}. The feasibility can be verified and this satisfies \textbf{R1} and \textbf{R6}--\textbf{R8}. \hfill \qed

\section{Necessity of $3$-way Entanglement for Toy Example 1} \label{proof:necessity3}
Recall that the setting in Toy Example 1 contains Alice$_1$, Alice$_2$, Alice$_3$, who have data streams $(A,B), (C,D)$, $(E,F)$, respectively, all symbols in $\mathbb{F}_d$ with $d=3$, and a receiver (Bob) who wishes to compute, 
{\small
\begin{align*}
	f(A,B,C,D,E,F)=\bbsmatrix{A+C+E\\ B+2D\\ B+2F}.
\end{align*}
}%
Suppose instead of all possible quantum coding schemes as specified in the problem formulation, we now only allow the transmitters to use \emph{pairwise} entanglement throughout all the stages. Specifically, Alice$_1$ and Alice$_2$ share a bipartite quantum system $Q_1 = Q_{1,1}Q_{1,2}$ such that $Q_{1,1}$ is accessible at Alice$_1$ and $Q_{1,2}$ is accessible at Alice$_2$. Similarly, Alice$_1$ and Alice$_3$ share another quantum system $Q_2$ such that $Q_{2,k}$ is accessible at Alice$_k$ for $k\in \{1,3\}$; Alice$_2$ and Alice$_3$ share another quantum system $Q_3$ such that $Q_{3,k}$ is accessible at Alice$_k$ for $k\in \{2,3\}$. $Q_1$, $Q_2$ and $Q_3$ are assumed to be independent in the preparation stage, kept unentangled in the encoding stage, and measured separately in the decoding stage, whereas the subsystems $Q_{i,j}$ and $Q_{i,k}$ are allowed to be entangled for distinct $j,k\in \{1,2,3\}$. Let $\delta_i, i\in [3]$ denote the dimension of $Q_i$ in the encoding stage. According to \cite{Yao_Jafar_Sum_MAC}, one can lower bound the total download cost $\sum_{i\in [3]}\log_d \delta_i /L$ by the classical (unentangled) total download cost of a hypothetical problem, where there are $\binom{3}{2}=3$ transmitters, denoted as Alice$_1'$, Alice$_2'$, Alice$_3'$, who know $(A,B,C,D), (A,B,E,F),(C,D,E,F)$, and the same receiver (Bob) who computes the same function $f$.
This is because any output measured from $Q_{i}$ can be sent directly through a same dimension classical system from Alice$_i'$ in the hypothetical setting for $i\in [3]$.
In the hypothetical setting, let $X_i, i\in [3]$ be a  $\delta_i'$-dimensional (classical) system sent from Alice$_i'$. We want to obtain a lower bound for $\sum_{i \in [3]}\log_d \delta_i'/L$.

Without loss of generality, assuming that each of the data streams $A,B,\cdots, F$ is uniformly distributed in $\mathbb{F}_d^L$, we have
{\small
\begin{align}
	& \sum_{i \in [3]}\log_d \delta_i'  \geq H(X_1,X_2,X_3) \\
	& = H(X_1,X_2, X_3, \bbsmatrix{ B+2D \\ B+2F}) \label{eq:nece_1} \\
	& = H(\bbsmatrix{B+2D \\ B+2F}) + H(X_1,X_2,X_3 \mid \bbsmatrix{B+2D \\ B+2F}) \\
	& = 2L + H(X_1,X_2,X_3 \mid \bbsmatrix{B+2D \\ B+2F})\\
	& \geq 2L + \frac{1}{2} \sum_{i=1}^3 H(X_{[3]\setminus \{i\}}   \mid X_i, \bbsmatrix{B+2D \\ B+2F}) \label{eq:nece_2} \\
	& = 2L + \frac{1}{2} \sum_{i=1}^3 H(X_{[3]\setminus \{i\}}, A+C+E   \mid X_i, \bbsmatrix{B+2D \\ B+2F}) \label{eq:nece_3}\\
	& \geq 2L + \frac{1}{2}(3L) \label{eq:nece_4} \\
	& = 3.5L \\
	& \hspace{0cm} \implies \sum_{i\in[3]} \log_d \delta_i'/L \geq 3.5
\end{align}
}%
Step \eqref{eq:nece_1}  holds because $(B+2D, B+2F)$ is determined by $(X_1,X_2,X_3)$.  Step \eqref{eq:nece_2} follows from submodularity of classical entropy, i.e., the general property that $2H(Z_1,Z_2,Z_3\mid Z_4)\geq H(Z_1,Z_2\mid Z_3,Z_4)+H(Z_2,Z_3\mid Z_1,Z_4)+H(Z_3,Z_1\mid Z_2,Z_4)$ for any classical random variables $Z_1,Z_2,Z_3,Z_4$. Step \eqref{eq:nece_3} holds because $A+C+E$ is determined by $(X_1,X_2,X_3)$. To see Step \eqref{eq:nece_4}, note that $(X_1,A,B,C,D,F)$ is independent of $E$, so that the first term in the sum, i.e., $H(X_2,X_3,A+C+E \mid X_1, \bbsmatrix{B+2D \\ B+2F})$ $ \geq H(X_2,X_3,A+C+E\mid A,B,C,D, F, X_1, \bbsmatrix{B+2D \\ B+2F}) \geq H(E) = L$, and similar reasoning applies to each of the three terms in the sum, so that their sum is lower bounded by $3L$. Therefore, the total download cost for the hypothetical problem is at least $3.5$. We conclude that with only $2$-way entanglement, the total download cost for Toy Example $1$ is at least $3.5$.

\end{document}